\newcommand*{\circled}[1]{\lower.2ex\hbox{\tikz\draw (0pt, 0pt)%
		circle (.4em) node {\makebox[1em][c]{\small #1}};}}
\renewcommand{\raggedright}{\leftskip=0pt \rightskip=0pt plus 0cm}
\newtheorem{theorem}{Theorem}
\newtheorem{lemma}{Lemma}
\newtheorem{definition}{Definition}
\newtheorem{remark}{Remark}
\newtheorem{corollary}{Corollary}
\newtheorem{proposition}{Proposition}
\begin{document}
%
\title{Leader Selection in Multi-Agent Networks with Switching Topologies via Submodular Optimization}
%
%
%

\author{Kaile Chen, Wangli He,~\IEEEmembership{Senior Member,~IEEE},
        Yang Tang,~\IEEEmembership{Senior Member,~IEEE}, and Wenle Zhang
\thanks{K. Chen, W. He, Y. Tang and W. Zhang are with the Key Laboratory of Advanced Control and Optimization
	for Chemical Processes, Ministry of Education, East China University of Science
	and Technology, Shanghai 200237, China (e-mail: tangtany@gmail.com).}
}
\maketitle

\begin{abstract}
In leader-follower multi-agent networks with switching topologies, 
choosing a subset of agents as leaders is a critical step to achieve desired performances.
In this paper, we concentrate on the problem of selecting a minimum-size set of leaders that ensure the tracking of a reference signal in a high-order linear multi-agent network with a set of given topology-dependent dwell time (TDDT). First, we derive a sufficient condition that guarantees the states of all agents converging to an expected state trajectory. Second, by exploiting submodular optimization method, we formulate the problem of identifying a minimal \emph{leader set} which satisfies the proposed sufficient condition. Third, we present an algorithm with the provable optimality bound to solve the formulated problem. Finally, several numerical examples are provided to verify the effectiveness of the designed selection scheme.
\end{abstract}

\begin{IEEEkeywords}
Leader selection, multi-agent networks, switching topologies, submoular optimization.
\end{IEEEkeywords}

%
\IEEEpeerreviewmaketitle

\section{Introduction}
%
%
%
%
\IEEEPARstart{R}{e}cent decades have witnessed an explosion of research in multi-agent networks (MAN) \cite{Saber-2004}-\cite{Wen-2018}, where the MAN framework is applied to analyze dynamical systems in a rich body of applications, containing the cooperative flight of multiple manned/unmanned combat
aerial vehicles \cite{multiple-leaders} and wireless sensor networks \cite{Yang-2018-tie}. The leader-follower configuration \cite{Song-Qiang-2014} is an important approach in MAN. {\color{black}This technique has emerged due to its contribution to the design of practicable strategies for formation control \cite{Leader-follower}, and is applied to the study of consensus tracking widely \cite{Wen-2014}-\cite{Duan Zhisheng-2018}. Specifically, a subset of agents selected as \emph{leaders} drive a multi-agent network towards the desired objective \cite{Ren-2007-book}. In recent years, several works have unveiled a fact that the suitable placement for leaders would have a major impact on the effectiveness of applying state-of-the-art control techniques \cite{Clark-2014-System-error}-\cite{Patterson-2019}. Therefore, the research interest of exploring systematic approaches to leaders deployment grows gradually. In addition, the study of such mechanisms parallels with controllability research. The significance of leader selection lies in ensuring the desired target with specified leaders as few as possible to impel the feasibility of control.

The early related results are summarized into \cite{2001-review}, where several criteria are presented to show that for control system design input selection plays an important role in realizing expected objectives. Actually, in a leader-follower multi-agent network, leaders act as the role of control inputs. Recent progress on the problem of choosing a minimal set of leaders mainly focuses on optimizing expected performances \cite{Clark-SIAM}. In what follows, three existing frameworks are listed briefly. First, several works study the leader selection problem from a graph-theoretic perspective in \cite{Pequito-robustness} and references therein. 
Especially, in the pioneering work \cite{Liu-nature}, an applicable selection method based on the maximum matching algorithm is presented in the exploration of large-scale complex networks. Second, the authors in \cite{Yang-2016} and \cite{Lin-2014-convex} solve the leader selection problem with a specific objective function, by relaxing the binary constraints into the convex hull. Then, the convex relaxation problem can be solved efficiently via a standard technique, such as the customized interior point method. Third, the submodular optimization technique is an another emerging focus \cite{Tzoumas-2016}-\cite{Summers-2019}. For instance, with such a tool, the leader selection problem for the realization of synchronization within a desired level is studied in \cite{Clark-2017-synchronizarion}. The strength of the submodular optimization scheme lies in the contribution to the establishment of polynomial-time approximation algorithms with the provable optimality bound for computationally prohibitive tasks \cite{Z. Liu}, while the above-mentioned convex relaxation algorithms cannot render such a bound. 
In addition, an overview of submodularity in leader selection is shown in \cite{Clark-magazine}.

In the study of dynamical networks, the situation where the communication topology changes over time is widely considered \cite{survey-2015}-\cite{Wen-2019-switching-2}. In practice, there are quite a few reasons that cause the switching behavior of the interaction topology, such as external disturbances and limitations of sensors \cite{Wen-2015}. The current literature on leader selection, including the above-mentioned works and references therein, mainly considers that the network topology is fixed, while we are interested in the switching case. In \cite{Clark-2014-System-error} where robustness to link noise could be optimized by selecting a \emph{leader set}, the realization of consensus that acts as the requirement of leader selection only depends on connectivity \cite{Ren-2005-tac} in the scenario of switching topologies, due to the first-order agent model. However, except for connectivity, the condition of the dwell time (the time internal between two consecutive switchings \cite{Lin-switching-survey}) is also required to guarantee consensus in the case of the second-order or high-order agent model \cite{Ni-2013-automatica}. Hence, we incorporate the connectivity and the dwell time jointly into the consideration so as to ensure consensus by leader selection. In addition, we relax the condition that each agent is reachable in any predefined graph in \cite{Leader-selection-switching} where the authors investigate leader selection in high-order linear MAS with switching graphs.

In this paper, we study the minimal leader selection problem in the high-order linear multi-agent network with a set of given TDDT. More specifically, we propose a heuristic algorithm to obtain an eligible \emph{leader set} which makes the states of remaining followers converge to the state of leaders, i.e., achieving consensus tracking. 
The main challenge of our work lies in how to reduce the computational complexity, since the solution to a desired \emph{leader set} is a prohibitive task with an exhaustive search if the number of agents is large. In this paper, by leveraging submodularity-ratio \cite{Das-2011}, we present an efficient approximation method with the provable optimality bound to overcome the difficulty.

The main contributions in this paper are listed below.

\begin{itemize}
	\item We derive a sufficient condition to determine the convergence of each follower's state to the states of leaders in the high-order multi-agent network with switching topologies. Furthermore, this condition is based on the case that each system mode corresponding to the predefined topology is considered to be unstable, so that the derived condition could be also applied to the scenario where several or all of the modes are required to be stable. 
	\item We formulate the optimization problem of determining a minimum-cardinality set of leaders that ensure the tracking with a set of given TDDT. Besides, the metric of leader selection is constructed based on the proposed sufficient condition, and is then utilized for evaluating whether an agent could be selected as a leader. 
	\item  By submodularity ratio, we present an efficient algorithm with the provable optimality bound to solve the formulated problem. To reduce the conservativeness, we present another heuristic selection algorithm. 
\end{itemize}

The remaining content of this paper is organized as follows. In Section \uppercase\expandafter{\romannumeral2}, preliminaries are presented. In Section \uppercase\expandafter{\romannumeral3}, the concise problem description is given. In Section \uppercase\expandafter{\romannumeral4}, a heuristic algorithm is shown to determine adequate leaders with a set of given TDDT. In Section \uppercase\expandafter{\romannumeral4}, the effectiveness of the proposed framework is validated by numerical examples. Section \uppercase\expandafter{\romannumeral5} concludes the paper.

\section{PRELIMINARIES}
In this section, we provide notations throughout this paper, and necessary concepts on algebraic graph theory. In addition, two definitions related to submodularity are presented, which are crucial points to describe main results.

\subsection{Notations}
\setlength{\parindent}{1em} $\mathbb{R}^n$ is the $n$-dimensional Euclidean space and $\textbf{1}_n$ is a $n$-dimensional vector with all components being 1. $I_n$ represents the identity matrix of dimension $n$. When a matrix $P$ is positive definite (positive semi-definite), then it is denoted as $ P \succ 0 $ ($ P \succeq  0$). $\text{diag}(A_1\text{,} \, A_2\text{,}\, ...\,\text{,}\,A_n)$ represents a block-diagonal matrix with matrices or scalars $A_i$ on its diagonal, $i=1\text{,}\,2\text{,}\,...\, \text{,}\,n$. $\otimes$ denotes Kronecker product, satisfying $A\otimes B = (A\otimes I_p)(I_n\otimes B)$, where $A\in \mathbb{R}^{m\times n}$, and $B\in \mathbb{R}^{p\times q}$. $|S|$ means the cardinality of a set $S$. $\lambda_i$($A$) represents the $i$th eigenvalue of the matrix $A$. Re($\gamma$) is the real part of the complex number $\gamma$. dist$(x\text{,}\, y)$ represents the Euclidean distance between vectors $x$ and $y$. $\lambda_{\text{max}}(A)$ and  $\lambda_{\text{min}}(A)$ denote the largest and smallest eigenvalues of the matrix $A$ respectively. $\lambda_{\text{r}}(A)$ represents the eigenvalue with the largest real part of the matrix $A$. The superscript $T$ means transpose for real matrices.


\subsection{Algebraic Graph theory}

 We define the digraph structure of a multi-agent network as $\mathcal{G} = \left\{\emph{$\Omega$, E}\right\}$, where the index set of $N$ agents is denoted as $\Omega$ = $\{1\text{,}\, 2\text{,}\, ...\, \text{,}\, N  \}$, while the index set of directed links is shown as $E \in \Omega \times \Omega$. The $N$ agents are divided into \emph{leaders} and \emph{followers}. The former, whose set are denoted by $S$ $\subseteq  \Omega$, act as external control inputs. They are available to the reference signal as well as the state values of their neighbors. The latter are only accessible to the information of adjacent agents. In addition, the neighbors index set of the agent $i$ is defined as  $\mathcal{N}(i) \; \triangleq \left\{j:(j, i)\in\emph{E}\right\}$. $L = [l_{ij}]\in \mathbb{R}^{N \times N}$ means the Laplacian matrix of $\mathcal{G}$, $i, j = 1\text{,}\, 2\text{,}\, ...\, \text{,}\, N$, $i \ne j$, $l_{ij} = -1$ if $(j, i) \in E$ and $l_{ij} = 0$ otherwise, where $(j, i)$ is a directed link from agent $j$ to agent $i$, satisfying
  \begin{align}\notag
	 \begin{split}
	 	\sum\limits_{j=1,j\ne i}^{N} l_{ij}=-l_{ii}.
	 \end{split}
 \end{align}
 Moreover, in this paper, we focus on simple graph only, i.e., without multiple links.

\subsection{Submodularity}

\begin{definition}[\cite{Clark-2014-System-error}]\label{definitionSubmodular}
	\emph{Set $V$ as a finite set. A function $f$: $2^V \rightarrow \mathbb{R}$, is submodular if for any subset of $V$, i.e.,  $S \subseteq T \subseteq V$, and any $v \in V  \backslash T$, such that:}
	\begin{center}
		$ f(S\cup {v}) - f(S) \geq f(T\cup {v}) - f(T)$.
	\end{center}
\end{definition}
This inequality characterizes the submodularity, which is the quantitative measure of the diminishing-return property \cite{Zhang-Edwin. Chong}. It is analogous to concavity of continuous functions, wherein the increment of adding a new component $v \in V \backslash T$ to the set $S$, is larger than or equal to the set $T$. Moreover, a function \emph{f} is submodular, if (-\emph{f}) is supermodular and vice versa.

\begin{definition}[\cite{Das-2011}]\label{submodularity-ratio}
	\emph{$f$: $2^{\Omega} \rightarrow \mathbb{R}$ is a non-negative set function. With respect to a subset $U \subseteq \Omega$ as well as a given constant $k \geq 1$, the submodularity-ratio of $f$ is given by}
	\begin{align}\notag
	\begin{split}
	\gamma_{U,k} =  \mathop{\emph{min}}\limits_{\substack{W \subseteq U \\ W \cap S = \varnothing \\ \left| S \right| \leq k}
	} \;  \frac{\sum_{l \in S}(f(W\cup \{l\}) - f(W))}{f(W\cup S) - f(W)}.
	\end{split}
	\end{align}
\end{definition}
For a general set function $f$, the submodularity-ratio captures its ``distance'' to submodularity. $f$ is submodular if and only if $\gamma_{U,k} \geq 1$, $\forall U,k$, and $f$ is a nondecreasing function, when $\gamma_{U,k} \in [0,1).$  Furthermore, this concept contributes to extending derivation of the provable optimality bound for algorithms, even though $f$ is not exactly submodular. Actually, it is straightforward to see that this definition is applicable to depict the ``distance" of a set function $f$ to supermodularity.

\section{PROBLEM DESCRIPTION}

In this section, we present the dynamics of the multi-agent network and then state the research problem briefly.

\subsection{Dynamics}
The dynamics of individual agent in the high-order linear multi-agent network with switching topologies is described as follows:
\begin{align}\label{1.1}
\begin{split}
{\color{black}\dot{x_i}(t) = Ax_i(t) + B u_i(t) \text{,}}
\end{split}
\end{align}
where
	\begin{align}\notag
		\begin{split}
			u_i(t) = \sum\limits_{j\in\mathcal{N}_{\sigma(t)}(i)} [x_j(t) -  x_i(t)] -d_i K_{\sigma(t)} [x_i(t)-x^*(t)] \text{,}
		\end{split}
	\end{align}
and $x_i(t) \in \mathbb{R}^{n}$ is the state of the agent $i$. $A$ $\in \mathbb{R}^{n\times n}$ is the individual self-dynamics matrix. $B$ is the individual control input matrix and we consider it as $I_N$ for brevity. For the sake of concise statement, we denote $\sigma (t)$ as the switching signal $:[0,\infty) \rightarrow \mathcal{P}$, a right continuous and piece-wise constant mapping. $\sigma (t)$ depicts the time dependence of underlying graphs. $\mathcal{P}$ represents an index set for predefined topologies, i.e., $\mathcal{P} = \left\{\mathcal{G}_1\text{,}\, \mathcal{G}_2\text{,}\, ...\, \text{,}\, \mathcal{G}_m\right\}$, where $m$ is the number of predefined topologies. The time internal between any two consecutive switchings is called as the dwell time $\tau$ \cite{Lin-switching-survey}. For reducing conservation, we consider that the dwell time is not identical but topology-dependent, which is denoted as $\tau_{\mathcal{G}_i}$, $\mathcal{G}_i \in \mathcal{P}$, $i$ = $1$, $2$, $...$, $m$. $\emph{d}_i$ is defined to be $1$ when the agent $i$ is selected as a \emph{leader} and 0 otherwise. The calculation for $K_{\sigma(t)}$ is shown in Section \uppercase\expandafter{\romannumeral4}. Besides, $x^*(t)\in \mathbb{R}^n$ is the state of the given reference signal, where $\dot{x}^*(t)=Ax^*(t)$.

For the convenience of analysis, we define the tracking error state between the agent $i$ and the reference signal as $\epsilon_i(t) = x_i(t) - x^*(t)$. By collecting total tracking error states, we introduce following mathematical expression:
\begin{align}\notag
\begin{split}
\epsilon(t) & = (\epsilon_1^T(t)\text{,}\, \epsilon_2^T(t)\text{,}\,...\,\text{,}\, \epsilon_N^T(t))^T\text{,}  \\
D & = \text{diag}(d_1\text{,}\, d_2\text{,}\, ...\,\text{,}\, d_N) 
\text{.}
\end{split}
\end{align}
Thus, the tracking error dynamics of the multi-agent network is written as the compact form:
\begin{align}\label{system-without-uncertainty}
\begin{split}
\dot{\epsilon}(t) = (I_N\otimes A - L_{\sigma (t)}\otimes I_n -  D \otimes K_{\sigma(t)} )  \epsilon(t) \text{.}
\end{split}
\end{align}
Subsequently, the tracking problem is transformed into the stabilization form, where we consider the \eqref{system-without-uncertainty} as the error system. For brevity, we rewrite the system \eqref{system-without-uncertainty}:
\begin{align}
\begin{split}\label{error-system}
\begin{array}{l}
\dot{\epsilon}(t) =  \mathcal{A}_{\sigma(t)}\epsilon(t) \text{,}
\end{array}
\end{split}
\end{align}
where $\mathcal{A}_{\sigma(t)} = I_N\otimes A - L_{\sigma (t)}\otimes I_n -  D \otimes K_{\sigma(t)} $. Besides, we consider $\mathcal{A}_p$ as a system mode for the $p$th topology, $p \in \mathcal{P}$. In this paper, the interaction topology changes over time, which leads to the mode switching.

\begin{remark}
	\emph{Actually, it is permissible for any system mode to be stable or unstable. If a system mode is considered as the stable case, i.e., $\lambda_{\text{r}}(\mathcal{A}_p) \le0$, it is demanding for the corresponding topology where each agent should be reachable. However, this topology condition is not necessary if there exists no specific constraint, since the topology requirement to ensure the tracking is that each agent is reachable in the union of the directed interaction graphs \cite{Ren-2005-tac}. Therefore, generally,	
		we consider that each mode is unstable, i.e., $\lambda_{\text{r}}(\mathcal{A}_p) > 0$, $\forall p \in \mathcal{P}$, which signifies that it is possible that there exist unreachable agents in any predefined topology. Furthermore, if some or all of the modes are required to be stable, then it would be better to consider multiple \emph{leader sets} so as to reduce the number of unnecessary leaders in each predefined graph. This implies that there are several \emph{leader sets} switching as the communication topology changes over time, and this direction is considered as our future work.  }
\end{remark}

\subsection{Minimal Leader Selection for Tracking}
The problem that we focus on is to select a minimum-size \emph{leader set} $S$ with a set of given TDDT, where the set $S$ determines the configuration matrix $D$ = \text{diag}$(d_1\text{,}\, d_2\text{,}\, ...\,\text{,}\, d_n)$, such that the asymptotic stability of the system \eqref{error-system} can be guaranteed. Then, we further give the description for the problem in terms of optimization form as follows:
	\begin{align}
	\begin{split}
	\bar{\mathcal{P}} \text{1} \;\;\;\;\;   &     \mathop{\text{min}}\limits_{S\subseteq \Omega} \; |S|\\
	\text{s.t.} \; \;  & \text{The system \eqref{error-system} is asymptotically stable} \text{,}
	\\ &	|S| \leq k  \text{,}
	\end{split}
	\end{align}
where $k $ is a given positive integer, as the upper bound for the desired number of leaders. $\bar{\mathcal{P}} \text{1}$ is combinatorial in nature, so that acquiring the solution is a computationally prohibitive task if the number of agents is large. In the next section, we leverage submodularity-ratio to solve $\bar{\mathcal{P}} \text{1}$ efficiently.

\section{THE LEADER SELECTION METHOD}

\setlength{\parindent}{1em}In this section, we mainly propose the method of choosing a minimal set of leaders with a set of given TDDT in order to realize asymptotic stability of the system \eqref{error-system}. In the first subsection, a sufficient condition is derived to guarantee the stability performance, which is equivalent to ensure the convergence of each follower' state to that of leaders. In the next subsection, based on the application of the submodular optimization scheme, we formulate the minimal leader problem. In the remaining subsection, an efficient algorithm is designed with the greedy rule, used for solving the formulated combinatorial optimization problem, and then we prove the optimality bound of the proposed method.

\subsection{The Sufficient Condition for Tracking}

We draw on an existing result, which is regarded as the preparation for our sufficient condition that ensures the system $\eqref{error-system}$ asymptotically stable.

\begin{lemma}[\cite{Xiang-2014}]\label{based-Xiang}
\emph{Given scalars $\eta \geq \eta^* \geq 0$, $\mu \in (0,1)$, $\tau^{\text{max}} \geq \tau^{\text{min}} > 0$, consider the system \eqref{error-system}. If there exists a set of matrices $ P_{p\text{,}i} \succ 0 $, $i = 0$, $1$, $...$, $l$, $p \in \mathcal{P}$, such that $\forall i = 0$, $1$, ..., $l - 1\text{,}\, \forall p\text{,}\,q \in \mathcal{P}$, $  p \ne q \,$,}
	\begin{align}
	 \mathcal{A}_p^TP_{p\text{,}i} + P_{p\text{,}i}\mathcal{A}_p + \psi_{p}^{(i)} - \eta P_{p\text{,}i} \prec 0 \label{Xiang-1}  \text{,} \\
	 \mathcal{A}_p^TP_{p\text{,}i+1} + P_{p\text{,}i+1}\mathcal{A}_p + \psi_{p}^{(i)} - \eta P_{p\text{,}i+1}  \prec 0 \label{Xiang-2}  \text{,} \\
	\mathcal{A}_p^TP_{p\text{,}l} + P_{p\text{,}l}\mathcal{A}_p - \eta P_{p\text{,}l}  \prec 0 \label{Xiang-3}  \text{,} \\
	 P_{q\text{,}0} - \mu P_{p\text{,}l}  \preceq 0  \label{Xiang-4} \text{,} \\
	 \emph{log} \mu + \eta \tau^{\emph{max}}   < 0  \label{Xiang-5} \text{,}
	\end{align}
\emph{where $\psi_p^{(i)} = l(P_{p\text{,}i+1} - P_{p\text{,}i}) / \tau^{\text{min}}$ and $(\mathcal{A}_p - \frac{1}{2}\eta^* I_{Nn}) $ is Hurwitz stable, $\forall p \in \mathcal{P} $, then the system $\eqref{error-system}$ is globally uniformly asymptotically stable (GUAS) under any switching law $\sigma(t)$ $\in$ $\mathcal{D}_{[\tau^{\text{min}}\text{,} \tau^{\text{max}}]}$, where $\mathcal{D}_{[\tau^{\text{min}}\text{,} \tau^{\text{max}}]}$ represents the set of all feasible switching policies with the dwell time $\tau_z \in [\tau^{\text{min}}\text{,} \tau^{\text{max}}] $, $\forall z = 0\text{,}\, 1\text{,}\, 2\text{,}\,...$.}
\end{lemma}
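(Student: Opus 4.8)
The plan is to establish global uniform asymptotic stability by constructing a single time-scheduled Lyapunov function whose matrix is interpolated along each dwell interval, and then chaining a decay estimate across the switching instants. First I would fix an admissible switching law $\sigma(t)\in\mathcal{D}_{[\tau^{\min},\tau^{\max}]}$ with switching instants $t_0<t_1<\cdots$ and active mode $p_z$ on $[t_z,t_{z+1})$, and define, for the elapsed time $s=t-t_z\in[0,\tau_z]$, a piecewise-linear-in-time interpolation of the matrices $P_{p_z,0},\dots,P_{p_z,l}$ over the window $[0,\tau^{\min}]$ (each of the $l$ pieces having length $\tau^{\min}/l$), holding the terminal value $P_{p_z,l}$ on $[\tau^{\min},\tau_z]$. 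Since every dwell time satisfies $\tau_z\ge\tau^{\min}$, the ramp always completes before the next switch, so this matrix function $P_{p_z}(s)$ is well defined, continuous, and piecewise $C^1$ with time-derivative equal to $\psi_{p_z}^{(i)}=l(P_{p_z,i+1}-P_{p_z,i})/\tau^{\min}$ on the $i$th piece and $0$ on the hold segment.

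Next I would differentiate $V(t)=\epsilon(t)^T P_{p_z}(s)\,\epsilon(t)$ along \eqref{error-system}. On the $i$th ramp piece, $P_{p_z}(s)$ is a convex combination of $P_{p_z,i}$ and $P_{p_z,i+1}$, so forming the same convex combination of \eqref{Xiang-1} and \eqref{Xiang-2} yields $\mathcal{A}_{p_z}^T P_{p_z}(s)+P_{p_z}(s)\mathcal{A}_{p_z}+\psi_{p_z}^{(i)}\prec\eta P_{p_z}(s)$, hence $\dot V<\eta V$; on the hold segment the same bound follows directly from \eqref{Xiang-3}. Thus $\dot V<\eta V$ throughout every interval, and a comparison argument gives $V(t_{z+1}^-)\le e^{\eta\tau_z}V(t_z^+)$. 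At the switch $t_{z+1}$ the state is continuous while the scheduled matrix jumps from $P_{p_z,l}$ to $P_{p_{z+1},0}$; condition \eqref{Xiang-4} gives $P_{p_{z+1},0}\preceq\mu P_{p_z,l}$, so $V(t_{z+1}^+)\le\mu\,V(t_{z+1}^-)$.

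I would then combine these into $V(t_{z+1}^+)\le\mu e^{\eta\tau_z}V(t_z^+)\le e^{\log\mu+\eta\tau^{\max}}V(t_z^+)$, using $\tau_z\le\tau^{\max}$ and $\eta\ge0$. By \eqref{Xiang-5} the factor $\rho:=e^{\log\mu+\eta\tau^{\max}}$ is strictly less than one, so $V(t_z^+)\le\rho^{z}V(t_0^+)\to0$; since within each interval $V$ grows by at most $e^{\eta\tau^{\max}}$ and $\tau_z\le\tau^{\max}$ forces the number of switches before time $t$ to grow at least linearly in $t$, the full continuous-time trajectory decays, and every constant depends only on $\eta,\mu,\tau^{\min},\tau^{\max}$, which delivers uniformity. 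The remaining hypothesis $\eta\ge\eta^*\ge0$ with $\mathcal{A}_p-\tfrac12\eta^* I_{Nn}$ Hurwitz serves to make the chosen growth rate $\eta$ compatible with the (generally unstable, as noted in the preceding remark) mode dynamics: it underwrites feasibility and positive-definiteness of the scheduled $P_p(s)$ and rules out finite escape during the growth phases, so that the inter-switch bound $e^{\eta\tau^{\max}}$ is genuinely finite and uniform.

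The main obstacle I anticipate is the bookkeeping of the time-varying Lyapunov function rather than any single inequality: one must verify that the ramp window $[0,\tau^{\min}]$ always fits inside the actual dwell interval, that $P_p(s)$ remains positive definite across the whole interpolation and not merely at the nodes $P_{p,i}$, and that the jump inequality is applied at precisely the endpoint value $P_{p_z,l}$ — which is exactly why the construction holds at $P_{p,l}$ after the ramp. Securing uniform (as opposed to merely asymptotic) convergence, i.e.\ bounds independent of the initial time and of the particular admissible switching signal, is the other delicate point, and it hinges on the inter-switch growth being bounded uniformly by $e^{\eta\tau^{\max}}$ together with the contraction factor $\rho$ being identical for every switch.
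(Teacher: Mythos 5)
The paper itself contains no proof of this lemma---it is imported verbatim from \cite{Xiang-2014}---and your argument correctly reconstructs the discretized time-scheduled Lyapunov function proof used in that reference: piecewise-linear interpolation of the $P_{p,i}$ over $[0,\tau^{\text{min}}]$ in $l$ pieces with a hold at $P_{p,l}$, the convex-combination step merging \eqref{Xiang-1}--\eqref{Xiang-2} (together with \eqref{Xiang-3} on the hold segment) to get $\dot V<\eta V$, the jump contraction $V(t_{z+1}^+)\le\mu V(t_{z+1}^-)$ from \eqref{Xiang-4}, and the per-cycle factor $e^{\log\mu+\eta\tau^{\text{max}}}<1$ from \eqref{Xiang-5}, chained with the upper dwell-time bound to yield uniform exponential decay. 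Every step is sound (positive definiteness of the interpolant is automatic as a convex combination of positive definite matrices, and your uniformity constants depend only on $\eta,\mu,\tau^{\text{min}},\tau^{\text{max}}$ and the finite mode set), so this is essentially the same approach as the cited source.
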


\begin{remark}
\emph{If we regard Lemma \ref{unstable-Xiang} as a sufficient condition directly to ensure the performance of tracking, it is problematic to formulate the minimal leader selection problem, since it is not intuitive to evaluate whether an agent could be selected as a leader by those linear matrix inequalities. In order to solve such a matter, we derive a sufficient condition so as to assure the asymptotic stability of the system \eqref{error-system}, and then we can formulate the leader selection problem via submodular optimization method with a scalar metric. Then, we can design an efficient algorithm to deal with the minimal leader selection problem. Thus, the following result plays a basic role in the construction of the proposed selection scheme.}

\end{remark}

\begin{theorem}\label{unstable-Xiang}
\emph{Given scalars $\eta_p > 0$, $\tau_p^{\text{min}} > 0$ and $\mu_p \in (0\text{,}1)$, consider the system \eqref{error-system} with all unstable modes. If the following conditions hold,
	\begin{align}
		&   \text{Re}(\lambda
		_{\text{r}}(\mathcal{A}_p^{(1)} + \frac{l_p + \varphi}{2\beta \tau_p^{\text{min}}}  I_{Nn})) < 0 \label{Theorem_1} \text{,} \\
	 &	 \text{Re}(\lambda
	 _{\text{r}}(\mathcal{A}_p + \frac{1}{2} (\frac{l_p}{\tau_p^{\text{min}}} - \eta_p) I_{Nn})) <  0 \label{Theorem_2} \text{,} 
	\end{align}
	$\forall p \in \mathcal{P}$, where $\varphi > 0$ is a constant with the sufficient small value,
	\begin{align}
		\beta & = \frac{\lambda_{\text{max}}((\mathcal{A}_p^{(1)})^T + \mathcal{A}_p^{(1)})}{2\text{Re}(\lambda_{\text{r}}(\mathcal{A}_p^{(1)}))} \label{Theorem_3}  \text{,}  \\ \mathcal{A}_p^{(1)} & = \mathcal{A}_p - \frac{1}{2}(\frac{l_p}{\tau_p^{\text{min}}} + \eta_p) I_{Nn}  \text{,} \label{Theorem_4} 
	\end{align}
		then there exist matrices $ I_{Nn} \succ P_{p\text{,}i} \succ 0 $, $i = 0$, $1$, $...$, $l_p$, such that $\forall i = 0$, $1$, $...$, \emph{$l_p - 1$}, $ \forall p \in \mathcal{P}$, satisfying}
	\begin{align}		
		\mathcal{A}_p^T P_{p\text{,}i} + P_{p\text{,}i}\mathcal{A}_p + \phi_{p\text{,}i} - \eta_p P_{p\text{,}i} \prec 0  \text{,}  \label{Theorem_5}  \\
			\mathcal{A}_p^T P_{p\text{,}i+1} + P_{p\text{,}i+1} \mathcal{A}_p + \phi_{p\text{,}i} - \eta_p P_{p\text{,}i+1}\prec  0 \text{,} \label{Theorem_6} 
	\end{align}
\emph{where} $\phi_{p\text{,}i} = l_p(P_{p\text{,}i+1} - P_{p\text{,}i})/\tau_p^{\emph{min}}$. \emph{Furthermore, if there exist matrices $P_{p\text{,}0}$ and $P_{p\text{,}l_p}$, $\forall p \in \mathcal{P}$, $p \ne q$, such that }
	\begin{align}
		P_{q\text{,}0} - \mu_q P_{p\text{,}l_p}  \preceq 0 \text{,} \label{Theorem_7} 
	\end{align}
\emph{then the total tracking error states of the system \eqref{error-system} can converge to zero when the TDDT satisfies $\tau_p \in [\tau_p^{\text{min}}\text{,}\, \tau_p^{\text{max}}]\text{,}\; p \in \mathcal{P}$, where}
	\begin{align}
	\emph{log} \mu_p + \eta_p \tau_p^{\emph{max}}   < 0 \text{.} \label{Theorem_8} 
	\end{align}
\end{theorem}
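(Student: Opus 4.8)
The plan is to proceed in two stages. First I would construct, for every mode $p\in\mathcal{P}$, an explicit family $\{P_{p,i}\}_{i=0}^{l_p}$ with $0\prec P_{p,i}\prec I_{Nn}$ that verifies the coupled inequalities \eqref{Theorem_5}--\eqref{Theorem_6}, using the scalar spectral conditions \eqref{Theorem_1}--\eqref{Theorem_2} as the feasibility certificate. Then, with the inter-mode coupling \eqref{Theorem_7} and the dwell-time bound \eqref{Theorem_8} adjoined, I would run a clock-driven multiple-Lyapunov-function argument in the spirit of Lemma \ref{based-Xiang} to conclude that \eqref{error-system} is GUAS, i.e. that the tracking error $\epsilon(t)\to 0$.

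For the construction I would partition each minimal dwell window into $l_p$ pieces of length $h_p=\tau_p^{\text{min}}/l_p$, so that $\phi_{p,i}=(P_{p,i+1}-P_{p,i})/h_p$ is exactly the forward finite difference of a clock-driven matrix $P_p(s)$ sampled at $s=i h_p$. Reading $\epsilon^T P_p(s)\epsilon$ as a Lyapunov function whose matrix varies along each window, the per-piece decrease requirement $\mathcal{A}_p^T P_p(s)+P_p(s)\mathcal{A}_p+\dot P_p(s)-\eta_p P_p(s)\prec 0$ is affine in $s$ on each subinterval, hence implied by its two endpoint (vertex) instances, which are precisely \eqref{Theorem_5} and \eqref{Theorem_6} once $\dot P_p$ is replaced by $\phi_{p,i}$. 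I would take $P_p(s)$ to be generated by the shifted dynamics $\frac{1}{2}\eta_p I_{Nn}-\mathcal{A}_p$ (the exact solution of $\dot P=\eta_p P-\mathcal{A}_p^T P-P\mathcal{A}_p$) with a small built-in decay margin, then rescale it into $(0,I_{Nn})$.

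The delicate point is that replacing the true derivative $\dot P_p$ by the finite difference $\phi_{p,i}$ leaves a residual of order $h_p\,\ddot P_p$, and the forward difference (entering \eqref{Theorem_5}) and the backward difference (entering \eqref{Theorem_6}) push this residual in opposite directions; both endpoint inequalities can hold simultaneously only if the step $h_p$ is small enough and a strict slack is present. This is exactly the role of the subdivision number $l_p$, the slack $\varphi>0$, and the constant $\beta$ of \eqref{Theorem_3}: since $\beta$ is the ratio of the numerical abscissa $\lambda_{\text{max}}((\mathcal{A}_p^{(1)})^T+\mathcal{A}_p^{(1)})$ to twice the spectral abscissa $\text{Re}(\lambda_{\text{r}}(\mathcal{A}_p^{(1)}))$ of the shifted matrix $\mathcal{A}_p^{(1)}$ in \eqref{Theorem_4}, it converts a spectral-abscissa decay rate into a uniform bound on $\|P_p(s)\|$ and on the residual $\ddot P_p$ over the window. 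Conditions \eqref{Theorem_1}--\eqref{Theorem_2} should then be recognized as the two requirements making the forward-difference and backward-difference endpoint inequalities strictly negative while keeping $P_{p,i}\prec I_{Nn}$; establishing these two strict matrix inequalities from the scalar eigenvalue conditions is where essentially all the work lies, and I expect it to be the main obstacle.

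For the second stage I would set $V(t)=\epsilon^T(t)P_{\sigma(t)}(s)\epsilon(t)$ with $s$ the clock of the active mode. On a dwell window of mode $p$, \eqref{Theorem_5}--\eqref{Theorem_6} give $\dot V\le \eta_p V$, so $V$ grows by at most $e^{\eta_p\tau_p}$ across the window, while the reset \eqref{Theorem_7}, $P_{q,0}\preceq\mu_q P_{p,l_p}$, forces $V$ to drop by the factor $\mu_q<1$ at the switch into mode $q$. Chaining these estimates over the switching sequence and invoking \eqref{Theorem_8}, $\text{log}\,\mu_p+\eta_p\tau_p^{\text{max}}<0$ (equivalently $\mu_p e^{\eta_p\tau_p}<1$ for every admissible window), forces the product of per-window multipliers to decay geometrically, so $V(t)\to 0$ and hence $\epsilon(t)\to 0$. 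The only care needed here, relative to a direct citation of Lemma \ref{based-Xiang}, is that all constants are topology-dependent and the boundary inequality \eqref{Xiang-3} must be recovered from the last constructed pair $(P_{p,l_p-1},P_{p,l_p})$; this is routine once the family from the first stage is available.
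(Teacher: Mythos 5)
Your second stage is sound but redundant---the paper does not re-derive the dwell-time argument at all; it verifies \eqref{Xiang-1}--\eqref{Xiang-5} and invokes Lemma \ref{based-Xiang}---so the theorem lives or dies in your first stage, and that is exactly where you have a genuine gap: you never derive the matrix inequalities \eqref{Theorem_5}--\eqref{Theorem_6} with $0 \prec P_{p,i} \prec I_{Nn}$ from the scalar conditions \eqref{Theorem_1}--\eqref{Theorem_2}, and you say yourself that this is ``the main obstacle.'' The missing idea is a quantitative bound on solutions of \emph{algebraic} Lyapunov equations (the paper's citation \cite{matrix-lemma-bounded}): if $(\mathcal{A}_p^{(1)})^T P_x + P_x \mathcal{A}_p^{(1)} + \frac{l_p+\varphi}{\tau_p^{\min}} I_{Nn} = 0$, then $\lambda_{\max}(P_x) \le \frac{(l_p+\varphi)/\tau_p^{\min}}{-2\beta\,\mathrm{Re}(\lambda_{\mathrm{r}}(\mathcal{A}_p^{(1)}))}$, with $\beta$ exactly the ratio \eqref{Theorem_3} and $\beta \in (0,1]$ by \cite{matrix-lemma-(0-1)}. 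Combined with Lemma \ref{lemma-eigenvalue}, which converts \eqref{Theorem_1} into $\mathrm{Re}(\lambda_{\mathrm{r}}(\mathcal{A}_p^{(1)})) < -\frac{l_p+\varphi}{2\beta\tau_p^{\min}}$, this yields $\lambda_{\max}(P_x) < 1$, hence $0 \prec P_x \prec I_{Nn}$ and $(\mathcal{A}_p^{(1)})^T P_x + P_x \mathcal{A}_p^{(1)} = -\frac{l_p+\varphi}{\tau_p^{\min}} I_{Nn} \prec -\frac{l_p}{\tau_p^{\min}} I_{Nn}$; expanding $\mathcal{A}_p^{(1)}$ and using $P_{p,i+1} \prec I_{Nn}$ gives \eqref{Theorem_5}, and the analogous equation for $\mathcal{A}_p^{(2)} = \mathcal{A}_p + \frac{1}{2}(\frac{l_p}{\tau_p^{\min}} - \eta_p) I_{Nn}$ at level $\varphi$ (feasible by \eqref{Theorem_2} and Lemma \ref{lemma-eigenvalue}) gives \eqref{Theorem_6} and, at $i = l_p - 1$, the terminal inequality \eqref{Xiang-3}. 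In other words, the paper's construction uses \emph{constant} solutions of two algebraic Lyapunov equations; no clock-driven matrix flow, no rescaling, no interpolation error.

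Your discretization-residual plan is also a misdiagnosis and, as stated, would fail. In Lemma \ref{based-Xiang} the Lyapunov matrix is piecewise \emph{linear} in the clock by construction, and \eqref{Xiang-1}--\eqref{Xiang-2} are exact vertex conditions for that interpolant, so no finite-difference error ever arises; a residual of order $h_p\ddot P_p$ appears only because you choose to sample a smooth solution of a differential Lyapunov equation, and the hypotheses give you nothing to dominate it with: $l_p$ is a given parameter of the theorem (part of the set $Q$), not a mesh you may refine---and refining it \emph{strengthens} \eqref{Theorem_1}--\eqref{Theorem_2} through $l_p/\tau_p^{\min}$---while $\varphi > 0$ serves only to make $-\frac{l_p+\varphi}{\tau_p^{\min}} I_{Nn} \prec -\frac{l_p}{\tau_p^{\min}} I_{Nn}$ strict, not to absorb an $O(h_p)$ error. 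Likewise $\beta$ is not a bound on $\|P_p(s)\|$ or $\ddot P_p$ along the window: it is the numerical-abscissa-to-spectral-abscissa ratio of $\mathcal{A}_p^{(1)}$ entering the solution bound above. So keep your stage-two chaining if you wish (it merely re-proves the cited Lemma \ref{based-Xiang}), but replace the clock-driven construction by the static Lyapunov-equation argument; without the eigenvalue bound from \cite{matrix-lemma-bounded} the central implication of the theorem remains unproved.
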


Prior to showing the proof, a lemma is needed as follows.

\begin{lemma}\label{lemma-eigenvalue}
	\emph{\eqref{Theorem_1} holds, $\forall p \in \mathcal{P}$, and then we have }
	 	\begin{align}\label{lemma-2}
		 \begin{split}
		 {\color{black}\emph{Re}(\lambda_{\emph{r}}(\mathcal{A}_p^{(1)})) < - \frac{l_p + \varphi}{2 \beta \tau_p^{\emph{min}}} \,  \text{,}}
		 \end{split}
		 \end{align}	
\emph{where $\varphi > 0$ is a constant with the sufficient small value.}
\end{lemma}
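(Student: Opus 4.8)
The plan is to recognize that the claimed inequality \eqref{lemma-2} is simply the hypothesis \eqref{Theorem_1} rewritten after a uniform real spectral shift, so the whole argument collapses to the elementary eigenvalue-shift property. First I would abbreviate $c_p \triangleq \frac{l_p + \varphi}{2\beta \tau_p^{\text{min}}}$ and check that this is a genuine real scalar: by \eqref{Theorem_3} the number $\beta$ is the ratio of $\lambda_{\text{max}}((\mathcal{A}_p^{(1)})^T + \mathcal{A}_p^{(1)})$, the largest eigenvalue of a real symmetric matrix, to $2\,\text{Re}(\lambda_{\text{r}}(\mathcal{A}_p^{(1)}))$, both of which are real; together with $l_p, \varphi, \tau_p^{\text{min}} \in \mathbb{R}$ this gives $c_p \in \mathbb{R}$. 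Hence $\mathcal{A}_p^{(1)} + c_p I_{Nn}$ is a real translate of $\mathcal{A}_p^{(1)}$ defined in \eqref{Theorem_4}.

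Second, I would invoke the standard fact that adding $c_p I_{Nn}$ shifts the entire spectrum: if $\mathcal{A}_p^{(1)} v = \lambda v$ then $(\mathcal{A}_p^{(1)} + c_p I_{Nn}) v = (\lambda + c_p) v$, so the eigenvalues of the shifted matrix are exactly $\{\, \lambda_j(\mathcal{A}_p^{(1)}) + c_p \,\}$. Because $c_p$ is real, every real part is displaced by the same amount, $\text{Re}(\lambda_j + c_p) = \text{Re}(\lambda_j) + c_p$, so the ordering of the eigenvalues by real part is preserved under the shift. Consequently the eigenvalue of $\mathcal{A}_p^{(1)} + c_p I_{Nn}$ with the largest real part is the one obtained from $\lambda_{\text{r}}(\mathcal{A}_p^{(1)})$, and we obtain the identity $\text{Re}(\lambda_{\text{r}}(\mathcal{A}_p^{(1)} + c_p I_{Nn})) = \text{Re}(\lambda_{\text{r}}(\mathcal{A}_p^{(1)})) + c_p$.

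Finally I would substitute this identity into the assumption \eqref{Theorem_1}, which for each $p \in \mathcal{P}$ states $\text{Re}(\lambda_{\text{r}}(\mathcal{A}_p^{(1)} + c_p I_{Nn})) < 0$; rearranging yields $\text{Re}(\lambda_{\text{r}}(\mathcal{A}_p^{(1)})) < -c_p = -\frac{l_p + \varphi}{2\beta \tau_p^{\text{min}}}$, which is precisely \eqref{lemma-2}. I do not anticipate any genuine obstacle here. The only point that deserves explicit care is the claim that the ``largest-real-part'' selector $\lambda_{\text{r}}(\cdot)$ commutes with a real shift; this is exactly the uniform-displacement argument of the second paragraph, and possible ties in the real parts are immaterial since only real parts enter the inequality.
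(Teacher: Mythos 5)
Your proposal is correct and is essentially the paper's own argument: both reduce \eqref{lemma-2} to the elementary fact that the spectrum of $\mathcal{A}_p^{(1)} + c_p I_{Nn}$ is that of $\mathcal{A}_p^{(1)}$ translated by the real scalar $c_p$, so \eqref{Theorem_1} rearranges directly into the claimed bound. The paper reaches the same shift identity through a detour with eigenvector quadratic forms $v_i^T(\mathcal{A}_p^{(1)} - \alpha I_{Nn})v_i$ (with $\alpha = -c_p$); your direct statement of the shift, including the observation that a uniform real displacement preserves the largest-real-part selector, is if anything the cleaner route.
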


\begin{proof}
	We refer the readers to Appendix A for the proof in details.
\end{proof}

Here, it is ready to present the proof of our first result.

\begin{proof}[Proof of Theorem \ref{unstable-Xiang}]
Considering Lemma \ref{lemma-eigenvalue} and the fact that $\beta \in (0,1]$ \cite{matrix-lemma-(0-1)}, it is explicit to see
\begin{align}\label{proof-1}
\begin{split}
\frac{(l_p + \varphi)/\tau_p^{\text{min}}}{- 2\beta  \text{Re}(\lambda_{\text{r}}(\mathcal{A}_p^{(1)}))}  < 1 \text{.}
\end{split}
\end{align}	
According to \cite{matrix-lemma-bounded}, since $( (\mathcal{A}_p^{(1)})^T + \mathcal{A}_p^{(1)}) \prec 0$, the solution $P_x \succ 0 $ of the following Lyapunov function
\begin{align}\notag
\begin{split}
(\mathcal{A}_p^{(1)})^T P_x + P_x \mathcal{A}_p^{(1)} + \frac{l_p + \varphi}{\tau_p^{\text{min}}} I_{Nn} = 0 \text{,}
\end{split}
\end{align}	
is bounded by
\begin{align}\notag
\begin{split}
\lambda_{\text{max}}(P_x) \leq \frac{(l_p + \varphi)/\tau_p^{\text{min}}}{- 2\beta  \text{Re}(\lambda_{\text{r}}(\mathcal{A}_p^{(1)}))}  \text{.}
\end{split}
\end{align}	
Thus, due to \eqref{proof-1}, one has
\begin{align}\notag
\begin{split}
\lambda_{\text{max}} (P_x) < 1  \text{.}
\end{split}
\end{align}	
which implies $0 \prec P_x \prec I_{Nn}$. Then, we derive
\begin{align}\notag
\begin{split}
(\mathcal{A}_p^{(1)})^T P_x + P_x \mathcal{A}_p^{(1)} = - \frac{l_p + \varphi}{\tau_p^{\text{min}}} I_{Nn} \prec - \frac{l_p}{\tau_p^{\text{min}}} I_{Nn} \text{,}
\end{split}
\end{align}	
Furthermore, combining with $P_{p \text{,} i} \prec I_{Nn}$, $\forall i = 0$, $1$, ... , $l_p$, then, it is intuitive to derive
		\begin{align}\notag
				\mathcal{A}_p ^T P_{p\text{,}i} + P_{p\text{,}i} \mathcal{A}_p - (\frac{l_p}{\tau_p^{\text{min}}} + \eta_p) P_{p\text{,}i} \prec - \frac{l_p}{\tau_p^{\text{min}}} P_{p \text{,}i+1} \text{.}
		\end{align}
which makes \eqref{Theorem_5} hold. Similarly, in light of \eqref{Theorem_2} as well as Lemma \ref{lemma-eigenvalue}, we have
	\begin{align}\notag
		\begin{split}
	&	{\color{black}\text{Re}(\lambda_{\text{r}}(\mathcal{A}_p +  \frac{1}{2} (\frac{l_p}{\tau_p^{\text{min}}} - \eta_p) ) < - \frac{\varphi}{2\beta_x}  \,  \text{,}}\\ 
		& \beta_x = \frac{\lambda_{\text{max}}((\mathcal{A}_p^{(2)})^T + \mathcal{A}_p^{(2)})}{2\text{Re}(\lambda_{\text{r}}(\mathcal{A}_p^{(2)}))}  \text{,} 
 		\end{split}
	\end{align}
where $\mathcal{A}_p^{(2)} = \mathcal{A}_p +  (l_p/\tau_p^{\text{min}} - \eta_p)/2$. Hence, the solution $I_{Nn} \succ P_x^{'} \succ 0$ of the following Lyapunov function exists, 
	\begin{align}\notag
		(\mathcal{A}_p^{(2)})^T P_x^{'} + P_x^{'}\mathcal{A}_p^{(2)}  + \varphi I_{Nn}= 0 \text{.}
	\end{align}
Then, $\forall i = 0$, $1$, ... , $l_p - 1$, one has 
		\begin{align}\label{P_{i+1}}
			\mathcal{A}_p^T P_{p\text{,}i+1} + P_{p\text{,}i+1} \mathcal{A}_p + \frac{l_p}{\tau_p^{\text{min}}}P_{p\text{,}i+1} - \eta_p P_{p\text{,}i+1} \prec 0	 \text{.}
		\end{align}
Since $P_{p \text{,} i} \succ 0$, $\forall i = 0$, $1$, ... , $l_p$, we see
		\begin{align}\notag
				\mathcal{A}_p^T P_{p\text{,}i+1} + P_{p\text{,}i+1} \mathcal{A}_p + \frac{l_p}{\tau_p^{\text{min}}}P_{p\text{,}i+1} - \eta_p P_{p\text{,}i+1} \prec \frac{l_p}{\tau_p^{\text{min}}} P_{p\text{,}i} \text{.}
		\end{align}
which satisfies \eqref{Theorem_6}. Due to \eqref{P_{i+1}}, it is obvious to see
	\begin{align}\notag	
		\mathcal{A}_p^T P_{p\text{,}i+1} + P_{p\text{,}i+1} \mathcal{A}_p  - \eta_p P_{p\text{,}i+1} \prec -\frac{l_p}{\tau_p^{\text{min}}}P_{p\text{,}i+1} \prec 0	 \text{.}
	\end{align}
With $i = l_p - 1$, we derive
		\begin{align}\notag	
			\mathcal{A}_p^T P_{p\text{,}l_p} + P_{p\text{,}l_p} \mathcal{A}_p  - \eta_p P_{p\text{,}l_p+1} \prec 0	 \text{.}
		\end{align}
Thereby, based on Lemma \ref{based-Xiang}, \eqref{Theorem_5}-\eqref{Theorem_8} make \eqref{Xiang-1}-\eqref{Xiang-5} hold. The proof is complete.
\end{proof}

\begin{remark}
\emph{There are three differences between Lemma \ref{based-Xiang} and Theorem \ref{unstable-Xiang}. First, the latter limits the solution range of the matrices by $ I_{Nn} \succ P_{p\text{,}i} \succ 0 $, $i = 0$, $1$,\,...\,, $l_p$, $p \in \mathcal{P}$, which is helpful to determine the existence of all qualified $P_{p\text{,}i}$. Second, Theorem \ref{unstable-Xiang} utilizes the TDDT, which is less conservative \cite{Lin-switching-survey} compared to the dwell time. In addition, the utilization of TDDT is beneficial to reduce the conservativeness for the leader selection. Third, the proposed condition transforms the existence of \eqref{Xiang-1}-\eqref{Xiang-3} into the determination of \eqref{Theorem_1}-\eqref{Theorem_2}, which is prepared for the construction of a scalar metric in the leader selection algorithm. Furthermore, the exact value of $\eta_p > 0$ is not arbitrary, but it has a specific lower bound with $\eta_p > l_p / \tau_p^{\text{min}}$. Actually, if $\eta_p$ is without such a bound, then it is impossible to make \eqref{Theorem_2} hold when $\lambda_{\text{r}}(\mathcal{A}_p) > 0$. Specially, it is accessible to extend this sufficient condition to considering some or all of stable modes. 
}
\end{remark}

Subsequently, we present a straightforward corollary to deal with the situation, where the system \eqref{error-system} is composed of stable modes and unstable modes. In addition, $\mathcal{S}$ and $\mathcal{U}$ denote the set of stable modes and unstable modes, respectively, where $\mathcal{S} \cup \mathcal{U} = \mathcal{P}$ and $\mathcal{S} \cap \mathcal{U} = \varnothing $.

\begin{corollary}
	\emph{Given scalars $\tau_p^{\text{min}} > 0$, $p \in \mathcal{P}$, $\mu_p \in (1\text{,}+\infty)$, $\eta_p < 0$, $p \in \mathcal{S}$, $\mu_p (0\text{,}1)$, $\eta_p > 0$, $p \in \mathcal{U}$, consider the system \eqref{error-system} with stable modes and unstable modes. If the following inequalities hold}
			\begin{align}\notag	
				& \emph{Re}(\lambda_{\emph{r}}(\mathcal{A}_p)) < \frac{1}{2}\eta_p \text{,} \;\; p \in   \mathcal{S} \text{,} \\
			&	\emph{Re}(\lambda
				_{\emph{r}}(\mathcal{A}_p^{(1)})) < - \frac{l_p + \varphi}{2\beta \tau_p^{\emph{min}}}   \text{,} \;\; p \in \mathcal{U} \text{,}   \notag \\
				&	 \emph{Re}(\lambda
				_{\emph{r}}(\mathcal{A}_p)) <  - \frac{1}{2} (\frac{l_p}{\tau_p^{\emph{min}}} - \eta_p)  \text{,}  \;\; p \in \mathcal{U} \text{,}   \notag
			\end{align}
\emph{then there exist matrices $ I_{Nn} \succ P_{p\text{,}i} \succ 0 $, $i = 0$, $1$, $...$, $l_p$, such that $\forall i = 0$, $1$, $...$, \emph{$l_p - 1$}, satisfying } 
	\begin{align}\label{stable}
			\mathcal{A}_p^T P_{p\text{,}i} + P_{p\text{,}i}\mathcal{A}_p - \eta_p P_{p\text{,}i} \prec 0  \text{,}  \;\; p \in \mathcal{S}\text{,}
	\end{align}	
\emph{and \eqref{Theorem_5}-\eqref{Theorem_6}, $p \in \mathcal{U}$. Thus, if there exist matrices $P_{p\text{,}0}$ and $P_{p\text{,}l_p}$, $\forall p \in \mathcal{P}$, $p \ne q$, such that \eqref{Theorem_7} holds, then the total tracking error states of the system \eqref{error-system} can converge to zero when the TDDT satisfies $\tau_p > -\frac{\text{log}\mu_p}{\eta_p}$, $p \in \mathcal{S}$, $\tau_p \in [\tau_p^{\text{min}}\text{,}\, \tau_p^{\text{max}}]\text{,} \;p \in \mathcal{U}$, where $\tau_p^{\text{max}}$ satisfies \eqref{Theorem_8}.}
	
\end{corollary}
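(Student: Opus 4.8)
The plan is to mirror the proof of Theorem \ref{unstable-Xiang}, handling the stable and unstable modes on separate branches and then gluing them through a time-scheduled multiple-Lyapunov-function argument driven by the TDDT. For the unstable modes $p \in \mathcal{U}$, the three displayed hypotheses of the corollary are verbatim the conditions \eqref{Theorem_1}--\eqref{Theorem_2} of Theorem \ref{unstable-Xiang} (with the same $\beta$ and $\mathcal{A}_p^{(1)}$), so I would simply invoke that theorem to obtain matrices $I_{Nn} \succ P_{p,i} \succ 0$ satisfying \eqref{Theorem_5}--\eqref{Theorem_6}. No new estimate is required on this branch.

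For the stable modes $p \in \mathcal{S}$, I would first observe that the target inequality \eqref{stable} is exactly $(\mathcal{A}_p - \tfrac{1}{2}\eta_p I_{Nn})^T P_{p,i} + P_{p,i}(\mathcal{A}_p - \tfrac{1}{2}\eta_p I_{Nn}) \prec 0$. The hypothesis $\text{Re}(\lambda_{\text{r}}(\mathcal{A}_p)) < \tfrac{1}{2}\eta_p$ with $\eta_p < 0$ states precisely that $\mathcal{A}_p - \tfrac{1}{2}\eta_p I_{Nn}$ is Hurwitz, so the associated Lyapunov equation admits a unique solution $P \succ 0$. Since \eqref{stable} is homogeneous in $P_{p,i}$, I would rescale this solution by a sufficiently small positive factor so that it falls strictly below $I_{Nn}$, and then set all $P_{p,i}$ equal to the rescaled matrix. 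This simultaneously secures \eqref{stable} and the bound $I_{Nn} \succ P_{p,i} \succ 0$; note that the stable branch carries no $\phi_{p,i}$ interpolation terms, which is why a single constant matrix per mode is enough.

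With the per-mode matrices in hand and the inter-mode coupling \eqref{Theorem_7} assumed, the final step is to assemble $V(t) = \epsilon^T(t) P_{\sigma(t)}(t)\epsilon(t)$, piecewise-interpolated on unstable dwell intervals as in Lemma \ref{based-Xiang} and constant on stable ones, and to track it across a switching sequence. On a dwell interval of mode $p$ the LMIs force $\dot V \le \eta_p V$, so $V$ is amplified by at most $e^{\eta_p \tau_p}$, while a switch into mode $q$ amplifies $V$ by at most $\mu_q$ through \eqref{Theorem_7}. Telescoping, each visited mode $p$ contributes a net factor $\mu_p e^{\eta_p \tau_p}$, and I would show this factor is strictly below one in both regimes: for $p \in \mathcal{S}$ the prescribed $\tau_p > -\log\mu_p/\eta_p$ gives $\log\mu_p + \eta_p\tau_p < 0$ (using $\eta_p<0$, $\mu_p>1$), and for $p \in \mathcal{U}$ the bound $\tau_p \le \tau_p^{\text{max}}$ with \eqref{Theorem_8} gives the same inequality (now $\eta_p>0$, $\mu_p<1$). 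Finiteness of $\mathcal{P}$ then yields a uniform contraction ratio below one, and boundedness of $V$ between switches upgrades decay at the switching instants to $\epsilon(t)\to 0$.

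The main obstacle I anticipate is this last gluing step, not either per-mode existence result. The delicate points are: Lemma \ref{based-Xiang} is stated with mode-independent $\eta,\mu,\tau^{\min},\tau^{\max}$, so I must justify the TDDT (per-mode) form of the multiple-Lyapunov estimate; the jump inequality \eqref{Theorem_7} must be read so that $\mu_q$ is indexed by the mode being entered, ensuring each per-visit factor pairs the matching $\mu_p$ with $\eta_p$; and interleaving genuinely expansive unstable intervals with contractive stable ones must still produce a net contraction, which hinges on each factor $\mu_p e^{\eta_p\tau_p}$ being below one \emph{uniformly} over $p$ rather than merely on average.
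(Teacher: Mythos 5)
Your proposal is correct and follows exactly the route the paper intends: the paper omits this proof, remarking only that it ``can be obtained based on Theorem \ref{unstable-Xiang} as well as \cite{Xiang-2014},'' which is precisely your decomposition --- invoking Theorem \ref{unstable-Xiang} verbatim on $\mathcal{U}$ (your hypotheses are the Lemma \ref{lemma-eigenvalue} reformulations of \eqref{Theorem_1}--\eqref{Theorem_2}), solving one rescaled Lyapunov equation per mode in $\mathcal{S}$ where no $\phi_{p\text{,}i}$ interpolation is needed, and gluing via the Xiang-style multiple-Lyapunov estimate with per-visit factors $\mu_p e^{\eta_p \tau_p} < 1$. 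The one blemish is your assertion that finiteness of $\mathcal{P}$ yields a contraction ratio \emph{uniformly} below one: on the stable branch the constraint $\tau_p > -\log\mu_p/\eta_p$ is open, so the per-visit factors are each below one yet may accumulate to a nonzero product along a switching sequence whose stable dwell times approach the threshold --- a looseness inherited from the corollary's own statement (the cited framework uses closed dwell-time intervals) rather than introduced by you.
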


In what follows, we show another corollary, which aims at the system \eqref{error-system} composed of stable modes. Hence, in the next corollary, it is intuitive that $\mathcal{S} = \mathcal{P}$.

\begin{corollary}
	\emph{ Given scalars $\mu_p \in (1\text{,}+\infty)$, $\eta_p < 0$, consider the system \eqref{error-system} with all stable modes. If the condition is fulfilled as follows,  }
		\begin{align}\notag	
			\emph{Re}(\lambda_{\emph{r}}(\mathcal{A}_p)) < \frac{1}{2}\eta_p \text{,} 
		\end{align}
\emph{then there exist matrices $ I_{Nn} \succ P_{p\text{,}i} \succ 0 $, $i = 0$, $1$, $...$, $l_p$, such that $\forall i = 0$, $1$, $...$, \emph{$l_p - 1$}, satisfying \eqref{stable}, $\forall p \in \mathcal{P}$. Hence, if there exist matrices $P_{p\text{,}0}$ and $P_{p\text{,}l_p}$, $\forall p \in \mathcal{P}$, $p \ne q$, such that \eqref{Theorem_7} holds, then the total tracking error states of the system \eqref{error-system} can converge to zero when the TDDT satisfies $\tau_p > -\frac{\text{log}\mu_p}{\eta_p}$.}
	
\end{corollary}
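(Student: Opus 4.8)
The plan is to establish the two assertions separately, following the template of the proof of Theorem \ref{unstable-Xiang} but specialising it to the decaying-mode regime $\eta_p<0$, $\mu_p>1$. For the existence of the matrices $P_{p,i}$, I would introduce the shifted matrix $\widehat{\mathcal{A}}_p = \mathcal{A}_p - \tfrac{1}{2}\eta_p I_{Nn}$ and note that \eqref{stable} is precisely the Lyapunov inequality $\widehat{\mathcal{A}}_p^{T}P_{p,i} + P_{p,i}\widehat{\mathcal{A}}_p \prec 0$. The hypothesis $\text{Re}(\lambda_{\text{r}}(\mathcal{A}_p)) < \tfrac{1}{2}\eta_p$ is equivalent to $\text{Re}(\lambda_{\text{r}}(\widehat{\mathcal{A}}_p)) < 0$, so $\widehat{\mathcal{A}}_p$ is Hurwitz for every $p\in\mathcal{P}$. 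Reusing the device from the proof of Theorem \ref{unstable-Xiang}, I would fix a sufficiently small $\varphi>0$, solve the Lyapunov equation $\widehat{\mathcal{A}}_p^{T}P_p + P_p\widehat{\mathcal{A}}_p + \varphi I_{Nn} = 0$ for $P_p\succ0$, and invoke the solution bound of \cite{matrix-lemma-bounded} together with $\beta\in(0,1]$ from \cite{matrix-lemma-(0-1)} to force $\lambda_{\text{max}}(P_p)<1$, i.e. $0\prec P_p\prec I_{Nn}$. Because \eqref{stable} is identical for all indices and does not couple them, I would simply take $P_{p,i}=P_p$ for $i=0,1,\ldots,l_p$, which produces the required family.

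For the convergence claim I would run a multiple-Lyapunov-function argument from scratch, since Lemma \ref{based-Xiang} is stated for $\eta\ge0$ and $\mu\in(0,1)$ and hence does not apply to the present signs. Setting $V_p(\epsilon)=\epsilon^{T}P_p\epsilon$, differentiating along \eqref{error-system} while mode $p$ is active and using \eqref{stable} gives $\dot{V}_p=\epsilon^{T}(\mathcal{A}_p^{T}P_p+P_p\mathcal{A}_p)\epsilon<\eta_p V_p$; since $\eta_p<0$ this yields the exponential decay $V_p(t)\le e^{\eta_p(t-t_k)}V_p(t_k)$ on each activation interval $[t_k,t_{k+1})$ of length $\tau_p$. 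At the switch from mode $p$ to mode $q$ the coupling \eqref{Theorem_7}, namely $P_{q,0}\preceq\mu_q P_{p,l_p}$, gives the jump bound $V_q\le\mu_q V_p$ at that instant.

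Combining decay and jump, the Lyapunov value is multiplied by at most $\mu_p e^{\eta_p\tau_p}$ over one full activation-plus-switch of mode $p$. The TDDT condition $\tau_p>-\tfrac{\log\mu_p}{\eta_p}$ is exactly $\log\mu_p+\eta_p\tau_p<0$ (dividing by $\eta_p<0$ reverses the inequality), so $\mu_p e^{\eta_p\tau_p}<1$; with finitely many modes this furnishes a uniform per-switch contraction factor bounded away from $1$. Telescoping over the switching sequence then drives $V\to0$ exponentially, and $P_p\succ0$ gives $\epsilon(t)\to0$, i.e. the total tracking error states converge to zero.

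I expect the principal obstacle to lie in the convergence step rather than the matrix construction: because the single-mode Lyapunov functions are linked only through the one-sided inequality \eqref{Theorem_7} at switching instants, one must argue carefully that the per-mode contraction factors are uniformly less than $1$ (not merely pointwise $<1$), which is what actually guarantees convergence to zero under arbitrarily many switches. A secondary, more routine difficulty is securing $P_p\prec I_{Nn}$ in the first step: this rests on the numerical-abscissa estimate underlying $\beta\in(0,1]$ applied to the shifted matrix $\widehat{\mathcal{A}}_p$, which must be re-verified because a Hurwitz matrix need not have a negative-definite symmetric part a priori.
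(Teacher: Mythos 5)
Your proposal is correct, and it in fact supplies what the paper leaves implicit: the paper omits this corollary's proof entirely, saying only that it follows from Theorem \ref{unstable-Xiang} together with \cite{Xiang-2014}. For the existence of the $P_{p,i}$ you follow the same Lyapunov-equation template as the proof of Theorem \ref{unstable-Xiang}, and your observation that \eqref{stable} carries no coupling term $\phi_{p,i}$ across the index $i$, so that one may simply take $P_{p,i}\equiv P_p$, is exactly the simplification that makes the all-stable case elementary. One remark dissolves the ``secondary difficulty'' you flag at the end: you do not need the solution bound of \cite{matrix-lemma-bounded} at all, and hence need not verify negative definiteness of $\widehat{\mathcal{A}}_p^{T}+\widehat{\mathcal{A}}_p$, because the solution of $\widehat{\mathcal{A}}_p^{T}P+P\widehat{\mathcal{A}}_p=-\varphi I_{Nn}$ depends linearly on $\varphi$ (explicitly $P=\varphi\int_0^{\infty}e^{\widehat{\mathcal{A}}_p^{T}t}e^{\widehat{\mathcal{A}}_p t}\,dt$ for Hurwitz $\widehat{\mathcal{A}}_p$), so $\lambda_{\text{max}}(P)<1$ holds for all sufficiently small $\varphi$ without any spectral estimate. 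For the convergence part you genuinely deviate from the paper's intended route: instead of deferring to the dwell-time machinery of \cite{Xiang-2014}, you run the multiple-Lyapunov-function contraction argument from scratch, and you are right that this is necessary if one insists on a literal derivation, since Lemma \ref{based-Xiang} assumes $\eta\geq\eta^{*}\geq 0$ and $\mu\in(0,1)$ and so does not cover the signs $\eta_p<0$, $\mu_p>1$ — a mismatch the paper glosses over by citation. Your explicit argument buys a self-contained proof and makes visible precisely where \eqref{Theorem_7} and the equivalence of $\tau_p>-\frac{\text{log}\mu_p}{\eta_p}$ with $\text{log}\mu_p+\eta_p\tau_p<0$ enter. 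Finally, your worry about uniformity of the contraction factors is settled by the paper's setup rather than by extra work: the TDDT $\tau_p$ is a fixed per-topology value (it does not range over $(-\frac{\text{log}\mu_p}{\eta_p},\infty)$ along the switching sequence), so with finitely many modes the factor $\max_{p\in\mathcal{P}}\mu_p e^{\eta_p\tau_p}<1$ is a genuine uniform bound; your caution was nonetheless warranted, since if dwell times were allowed to decrease toward the threshold along the sequence, the telescoped product could indeed fail to converge to zero.
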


The proof of two corollaries can be obtained based on Theorem \ref{unstable-Xiang} as well as \cite{Xiang-2014}, and then they are omitted here. Thereby, the proposed sufficient condition characterizes the situation that even if each mode of the system  $\eqref{error-system}$ is unstable, it is still possible to realize tracking under an appropriate $\sigma(t)$. It is required to point out that the feasibility of \eqref{Theorem_1}-\eqref{Theorem_2} poses the possibility of satisfaction for \eqref{Theorem_7}-\eqref{Theorem_8}, so that the system $\eqref{error-system}$ is GUAS with the TDDT  $\tau_p \in [\tau_p^{\text{min}}\text{,}\, \tau_p^{\text{max}}]\text{,}\; p \in \mathcal{P}$. As a result, we consider \eqref{Theorem_1}-\eqref{Theorem_2} as decisive factors. Actually, they serve as preconditions to the existence of an eligible \emph{leader set} in the proposed algorithm. 

For the convenience of the leader selection metric construction, we present following proposition to combine \eqref{Theorem_1}-\eqref{Theorem_2} into one constraint.

\begin{proposition}\label{proposition-2}
	\emph{Due to $\beta \in (0,1]$ \cite{matrix-lemma-(0-1)}, we set $\beta = 1$. Then, considering \eqref{Theorem_1}, we obtain
		\begin{align}\label{proposition-2-1}
		\text{Re}(\lambda_{\text{r}}(\mathcal{A}_p^{(1)}))  < - \frac{l_p + \varphi}{2  \tau_p^{\text{min}}} \text{.}
		\end{align}
Thus, if \eqref{Theorem_2} holds, then \eqref{proposition-2-1} is satisfied.
	}
\end{proposition}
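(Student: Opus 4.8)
The plan is to observe that the hypothesis \eqref{Theorem_2} and the target \eqref{proposition-2-1} are both constraints on the largest-real-part eigenvalue of a \emph{real} shift of the single matrix $\mathcal{A}_p$, and to reduce each of them to an elementary scalar upper bound on $\text{Re}(\lambda_{\text{r}}(\mathcal{A}_p))$. The passage from \eqref{Theorem_1} to \eqref{proposition-2-1} is merely the substitution $\beta = 1$ into the threshold $-\tfrac{l_p+\varphi}{2\beta\tau_p^{\text{min}}}$, which is legitimate because $\beta \in (0,1]$; the only substantive claim is the implication \eqref{Theorem_2} $\Rightarrow$ \eqref{proposition-2-1}. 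The one tool required is the spectral-shift identity: for a real scalar $c$, the spectrum of $\mathcal{A}_p + cI_{Nn}$ is that of $\mathcal{A}_p$ translated by $c$, so $\text{Re}(\lambda_{\text{r}}(\mathcal{A}_p + cI_{Nn})) = \text{Re}(\lambda_{\text{r}}(\mathcal{A}_p)) + c$. I would state and use this identity throughout.

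First I would rewrite the hypothesis. Applying the identity to \eqref{Theorem_2} with $c = \tfrac12\big(\tfrac{l_p}{\tau_p^{\text{min}}} - \eta_p\big)$ shows that \eqref{Theorem_2} is equivalent to the scalar bound $\text{Re}(\lambda_{\text{r}}(\mathcal{A}_p)) < \tfrac{\eta_p}{2} - \tfrac{l_p}{2\tau_p^{\text{min}}}$. Next I would rewrite the target. Substituting the definition $\mathcal{A}_p^{(1)} = \mathcal{A}_p - \tfrac12\big(\tfrac{l_p}{\tau_p^{\text{min}}} + \eta_p\big)I_{Nn}$ from \eqref{Theorem_4} and using the same identity, \eqref{proposition-2-1} becomes $\text{Re}(\lambda_{\text{r}}(\mathcal{A}_p)) < \tfrac{\eta_p}{2} - \tfrac{\varphi}{2\tau_p^{\text{min}}}$, after the $-\tfrac{l_p}{2\tau_p^{\text{min}}}$ terms cancel.

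The conclusion is then a comparison of the two scalar thresholds. Since $l_p$ is a positive integer (the number of partition points indexing the matrices $P_{p,i}$, $i=0,1,\ldots,l_p$) and $\varphi>0$ is chosen sufficiently small, we have $l_p > \varphi$ and therefore $\tfrac{\eta_p}{2} - \tfrac{l_p}{2\tau_p^{\text{min}}} < \tfrac{\eta_p}{2} - \tfrac{\varphi}{2\tau_p^{\text{min}}}$. Hence every $\mathcal{A}_p$ meeting the \eqref{Theorem_2} bound automatically meets the looser \eqref{proposition-2-1} bound, which is exactly the asserted implication. I do not anticipate a genuine obstacle: the argument is a single spectral translation followed by an ordering of two constants. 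The only places that warrant an explicit word are the legitimacy of the $\beta=1$ substitution (secured by $\beta \in (0,1]$) and the inequality $l_p > \varphi$, which is what makes the \eqref{Theorem_2} threshold the tighter of the two.
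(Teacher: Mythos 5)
Your proof is correct and takes essentially the same route as the paper's: both arguments reduce \eqref{Theorem_2} and \eqref{proposition-2-1} to scalar thresholds on $\text{Re}(\lambda_{\text{r}}(\mathcal{A}_p))$ via the spectral-shift identity (the paper invokes its Lemma~\ref{lemma-eigenvalue} for this step) and then compare $\tfrac{1}{2}(\eta_p - l_p/\tau_p^{\text{min}})$ with $\tfrac{1}{2}(\eta_p - \varphi/\tau_p^{\text{min}})$. Your explicit observation that $l_p \geq 1 > \varphi$ for sufficiently small $\varphi$ simply makes precise the ordering of thresholds that the paper asserts holds ``intuitively.''
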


\begin{proof}
	Based on Lemma \ref{lemma-eigenvalue}, for \eqref{Theorem_2}, it is obvious to see
	\begin{align}\notag
	\text{Re}(\lambda_{\text{r}}(\mathcal{A}_p)) < \frac{1}{2} (\eta_p - \frac{l_p}{\tau_{\text{min}}}) \text{.}
	\end{align}
	Similarly, for \eqref{proposition-2-1}, we have
	\begin{align}\notag
	\text{Re}(\lambda_{\text{r}}(\mathcal{A}_p)) < \frac{1}{2} (\eta_p + \frac{l_p}{\tau_{\text{min}}})  - \frac{l_p + \varphi}{2 \tau_p^{\text{min}}} = \frac{1}{2}(\eta_p - \frac{\varphi}{\tau_p^{\text{min}}}) \text{.}
	\end{align}
	The $\varphi > 0 $ is a constant with the sufficient small value, and then Proposition \ref{proposition-2} holds intuitively. The proof is complete.
\end{proof}

\begin{remark}
	\emph{Naturally, whatever the value of $\beta$ is, we still can obtain one condition that satisfies \eqref{Theorem_1}-\eqref{Theorem_2} simultaneously based on Proposition \ref{proposition-2}. It is obvious to see that \eqref{Theorem_1} is almost equal to \eqref{Theorem_2} when $\beta = 0.5$. In fact, based on the proposed algorithm in the last subsection, after acquiring the configuration matrix $D$ as well as $K_p$, $\forall p \in \mathcal{P}$, the value of $\beta$ could be obtained by calculations, and $\beta$ should be verified. If the computed value of $\beta$ is less than $0.5$, then we reduce the setting value such as $\beta = 0.4$ to operate the selection method all over again until the calculated value of $\beta$ is more than or equal to the setting value, and without any operation when the computed value of $\beta$ is more than $0.5$. }
\end{remark}


\subsection{A Metric for Leader Selection}
In this subsection, we establish the metric for the leader selection method with the form of the $\gamma$-submodular function. Besides, the $\gamma$-submodular means the function with respect to $\gamma$ submodularity-ratio. Then, we finish the metric construction by introducing the lemma below:
\begin{lemma}[\cite{Z. Liu}]\label{lemma-K}
\emph{For a linear system such as}
	\begin{equation}
		\left\{
		\begin{aligned}\label{closed-loop}
		\dot{\hat{x}} (t)& = \hat{A} \hat{x}(t) + \hat{B} \hat{u}(t) \text{,} \\
		\hat{y}(t) & = \hat{C} \hat{x}(t) \text{,}
		\end{aligned}
		\right.
	\end{equation}
\emph{it is determined as a fully observable control system, where $\hat{x}(t) \in \mathbb{R}^n$ is the system state. There exists a feedback control matrix $\hat{K}$, if all eigenvectors $v_i$ of $\hat{A}$ with eigenvalues $\lambda_i$ satisfying Re$(\lambda_i) \geq \hat{\lambda}$, lie in the span of the controllability matrix $\mathcal{C}(\hat{A}, \hat{B})$. It ensures $\text{Re}(\lambda_i(\hat{A} - \hat{B} \hat{K})) < \hat{\lambda}$ for the closed-loop system \eqref{closed-loop}, where $\hat{\lambda}$ is a given constant. Moreover, span($\mathcal{C}(\hat{A}, \hat{B})$) = span$(W(\hat{A}, \hat{B}))$ $\cite{span(W(S))=span(C(S))}$, where}
	\begin{align}\notag
		\begin{split}
		  \mathcal{C}(\hat{A}, \hat{B}) & = [ \hat{B} \;\; \hat{A} \hat{B}\;\; ... \;\; \hat{A}^{n-1}\hat{B}] \, \text{,}\\
		 W(\hat{A}, \hat{B}) & = \int_{t_0}^{t_1}e^{\hat{A}(t-t_0)}\hat{B} \hat{B} ^Te^{\hat{A}^T(t-t_0)}dt \text{,}
		\end{split}
	\end{align}
\emph{\text{for some} $t_1  > t_0$.	}
\end{lemma}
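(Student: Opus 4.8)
The plan is to establish this eigenstructure-assignment result through the Kalman controllability decomposition rather than arguing directly with the eigenvectors. First I would fix an invertible $T$ that brings the pair $(\hat{A},\hat{B})$ into the standard controllable--uncontrollable form, so that $T^{-1}\hat{A}T$ is block upper-triangular with diagonal blocks $\hat{A}_c$ and $\hat{A}_{\bar c}$, while $T^{-1}\hat{B}$ has its lower block equal to zero; here $(\hat{A}_c,\hat{B}_c)$ is controllable and the controllable subspace $\mathcal{R}$ coincides with $\text{span}(\mathcal{C}(\hat{A},\hat{B}))$. The uncontrollable eigenvalues are then exactly the spectrum of $\hat{A}_{\bar c}$, and no feedback $\hat{K}$ can relocate them, which is why the hypothesis must control precisely these modes.

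Next I would translate the hypothesis into a statement about this decomposition. Since $\mathcal{R}$ is $\hat{A}$-invariant and, by assumption, every eigenvector $v_i$ whose eigenvalue satisfies $\text{Re}(\lambda_i)\ge\hat{\lambda}$ lies in $\mathcal{R}=\text{span}(\mathcal{C}(\hat{A},\hat{B}))$, each such mode belongs to the controllable part and therefore does not appear in the spectrum of $\hat{A}_{\bar c}$. Consequently every uncontrollable eigenvalue obeys $\text{Re}(\lambda)<\hat{\lambda}$, i.e. the uncontrollable dynamics already sit strictly to the left of the target line and need no correction.

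Then I would invoke the pole-placement theorem on the controllable subsystem $(\hat{A}_c,\hat{B}_c)$ to produce a gain $\hat{K}_c$ with $\text{Re}(\lambda(\hat{A}_c-\hat{B}_c\hat{K}_c))<\hat{\lambda}$, and lift it back via $\hat{K}=[\hat{K}_c\;\; 0]\,T^{-1}$. The block-triangular structure together with the zero lower block of $T^{-1}\hat{B}$ guarantees that $T^{-1}(\hat{A}-\hat{B}\hat{K})T$ stays block upper-triangular with diagonal blocks $\hat{A}_c-\hat{B}_c\hat{K}_c$ and $\hat{A}_{\bar c}$, so the closed-loop spectrum is the union of the two, all with real part below $\hat{\lambda}$. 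Finally, the cited identity $\text{span}(\mathcal{C}(\hat{A},\hat{B}))=\text{span}(W(\hat{A},\hat{B}))$ lets me restate the controllable subspace through the finite-horizon Gramian, which is the form actually used in the subsequent metric construction.

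The main obstacle I anticipate is the second step: rigorously converting the right-eigenvector containment hypothesis into a clean controllability statement when $\hat{A}$ has repeated or defective eigenvalues. For a Jordan block a single eigenvector no longer captures the whole mode, so I would have to argue at the level of the $\hat{A}$-invariant generalized eigenspaces and verify that every chain whose eigenvalue has $\text{Re}(\lambda)\ge\hat{\lambda}$ sits inside $\mathcal{R}$, rather than relying on the eigenvectors alone. Handling this bookkeeping carefully -- and clarifying that the full-observability hypothesis does not enter the placement argument itself but serves the cited span identity and the minimal-realization setting -- is where the real work concentrates.
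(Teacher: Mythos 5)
You should note at the outset that the paper contains no proof of this lemma to compare against: it is imported verbatim from \cite{Z. Liu}, with the span identity delegated to \cite{span(W(S))=span(C(S))}. Judged on its own merits, your route --- Kalman controllable/uncontrollable decomposition, reading off that the uncontrollable spectrum is $\sigma(\hat{A}_{\bar c})$, pole placement on $(\hat{A}_c,\hat{B}_c)$, lifting $\hat{K}=[\hat{K}_c\;\;0]\,T^{-1}$, and invoking $\mathrm{span}(\mathcal{C})=\mathrm{span}(W)$ at the end --- is the standard proof of this kind of statement, and it is correct whenever $\hat{A}$ is diagonalizable: in that case an $\hat{A}$-invariant subspace splits along eigenspaces, $\mathcal{R}=\bigoplus_\mu(\mathcal{R}\cap E_\mu)$, so $E_\lambda\subseteq\mathcal{R}$ for all $\mathrm{Re}(\lambda)\ge\hat{\lambda}$ genuinely excludes such $\lambda$ from the quotient spectrum, and your step 2 goes through. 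You are also right on the two side points: full observability plays no role in the placement argument, and the Gramian identity holds with no controllability assumption, since $\mathrm{range}\,W(t_0,t_1)=\mathrm{range}\,\mathcal{C}(\hat{A},\hat{B})$ for every $t_1>t_0$ by analyticity of $e^{\hat{A}t}\hat{B}$ and Cayley--Hamilton.

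The defective case you flag at the end, however, is a genuine gap and not bookkeeping: the lemma as stated (right eigenvectors only) is \emph{false} there, so your plan to ``verify that every chain whose eigenvalue has $\mathrm{Re}(\lambda)\ge\hat{\lambda}$ sits inside $\mathcal{R}$'' cannot be carried out from the stated hypothesis --- it silently strengthens it. Concretely, take $\hat{A}=\left(\begin{smallmatrix}\lambda & 1\\ 0 & \lambda\end{smallmatrix}\right)$ with $\mathrm{Re}(\lambda)\ge\hat{\lambda}$ and $\hat{B}=(1,0)^T$. Then $\mathrm{span}(\mathcal{C}(\hat{A},\hat{B}))=\mathrm{span}\{e_1\}$ contains the unique eigenvector $e_1$, so the hypothesis holds; yet $\mathrm{rank}[\hat{A}-\lambda I\;\;\hat{B}]=1<2$, so $\lambda$ is an uncontrollable mode and survives in $\hat{A}-\hat{B}\hat{K}$ for every $\hat{K}$, defeating the conclusion. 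The correct hypothesis is containment of the \emph{generalized} eigenvectors (equivalently, of the spectral subspace for $\{\lambda:\mathrm{Re}(\lambda)\ge\hat{\lambda}\}$ in $\mathcal{R}$), or an explicit diagonalizability assumption on $\hat{A}$ --- which is implicitly the setting in which this paper uses the lemma, since the metric $f$ is built solely from eigenvectors $v_i$ and $f=0$ certifies the existence of $K_p$ only under that reading. So your proof is right in shape; finish it by adding the diagonalizability (or generalized-eigenvector) hypothesis explicitly rather than attempting to deduce chain containment from eigenvector containment.
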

\setlength{\parindent}{1em}Thus, due to Lemma \ref{lemma-K}, {\color{black} we rewrite the parameter matrices in \eqref{Theorem_2}}, $p \in \mathcal{P}$:
\begin{align}\notag
\begin{split}
  \hat{\mathcal{A}}_p = & \;I_N \otimes A - L_p\otimes I_n  + \frac{1}{2} (\frac{l_p}{\tau_p^{\text{min}}} - \eta_p ) I_{Nn}  \text{,}\\
  \hat{B} = & \; D\otimes I_n \text{,} \;
 \hat{K}_p =  I_N \otimes K_p \text{.}
\end{split}
\end{align}
Thus, the metric is constructed as
\begin{align}\notag
\begin{split}
 f \triangleq \sum_{p\in \mathcal{P}}  \sum_{i:\text{Re}_{f}}  \text{dist}^2(v_i, \text{span}(W(\hat{\mathcal{A}}_p, \hat{B}))) \text{,}
\end{split}
\end{align}
where $\text{Re}_{f} = $ $ \text{Re}(\lambda_i(\hat{\mathcal{A}}_p))\geq 0$, $p \in \mathcal{P}$. Re$_{f}$ is an eigenvalue of $\hat{\mathcal{A}}_p$ with the condition of $\text{Re}(\lambda_i(\hat{\mathcal{A}}_p))\geq 0$, and $v_i$ is the corresponding eigenvector.

\begin{remark}
	\emph{{\color{black} In light of Lemma \ref{lemma-K} and Proposition \ref{proposition-2}, it is inferred that if $f = 0$ then there exists a set of $\hat{K}_p$, which makes \eqref{Theorem_1}-\eqref{Theorem_2} hold. Thus, if $\eqref{Theorem_7}$-\eqref{Theorem_8}} are also satisfied, then the tracking of a reference signal can be guaranteed if $\tau_p \in [\tau_p^{\text{min}}\text{,}\, \tau_p^{\text{max}}]$, $\forall p \in \mathcal{P}$. The strength of such transformation is that we can assure the existence of $K_p$ by \eqref{Theorem_2} after leader selection, instead of designing the exact values of $K_p$ before the selection algorithm, $p \in \mathcal{P}$. 
	}
\end{remark}

Thus, we rewrite $\bar{\mathcal{P}}$1 as follows.
\begin{align}
\begin{split}
\check{\mathcal{P}} \text{1} \;\;\;\;\;       & \mathop{\text{min}}\limits_{S\subseteq \Omega} \; |S|\\
\text{s.t.} \; \; &  f = 0 \, \text{,} \\
\begin{split}
&  \eqref{Theorem_7}-\eqref{Theorem_8}  \, \text{,}
\end{split}                   \\
& 	|S| \leq k  \text{.}
\end{split}
\end{align}

\setlength{\parindent}{1em}Here, the metric $f$ construction is completed. Prior to showing our further result, some notations are listed below:
\begin{align}\notag
\begin{split}
 C_{\Omega, p}(\hat{\mathcal{A}}_p, B_{\Omega})& = [B_{\Omega}\text{,}\; \hat{\mathcal{A}}_p B_{\Omega}\text{,} \;...\text{,}\; \hat{\mathcal{A}}_p^{Nn - 1}B_{\Omega}]  \text{,}\\
 C_p(\hat{\mathcal{A}}_p\text{,}\, \hat{B}_S) & = [\hat{B}_S \text{,}\; \hat{\mathcal{A}}_p\hat{B}_S \text{,} \;...\text{,}\; \hat{\mathcal{A}}_p^{Nn - 1}\hat{B}_S] \text{,}\\
  B_{\Omega}  = I_{Nn}\text{,} \; \hat{B}_S &= D \otimes I_n  \text{,} \; p \in \mathcal{P} \text{,}  \\
 \bar{C}_p & = (C_{\Omega, p} P_t)^T (C_{\Omega, p}P_t)/(Nn)  \text{,} \\
 \lambda_{\text{min}}(\bar{C}_p\text{,}\,k+|U|) & =\mathop{\text{min}}\limits_{ S:|S| = k + |U| }\lambda_{\text{min}}(\bar{C}_{S,p}) \text{,}
\end{split}
\end{align}
where $D$ is the diagonal matrix determined by \emph{leader set} $S$, where the $i$th diagonal element of $D$ is 1 if $i$th agent is selected as \emph{leader} and 0 otherwise. $P_t\in \mathbb{R}^{(Nn)^2\times(Nn)^2} $ is a nonsingular matrix, leading to each column of $(C_{\Omega\text{,}p}P)$ have norm 1. $\bar{C}_{S\text{,}\,p}$ is derived from $\bar{C}_p$ by removing all zeroes rows and columns. A vector $v$ with $\left\| v \right\|_2 = 1$, by referring to $\cite{Z. Liu}$, we have
\begin{align}\notag
\begin{split}
f_{v\text{,}p} = \text{dist}^2(v\text{,}\,\text{span}(C_p(\hat{\mathcal{A}}_p\text{,}\, \hat{B}_S)) = 1 - g_{v,p} \; \text{,} \;\; p \in \mathcal{P} \text{,}
\end{split}
\end{align}
where
\begin{align}\notag
\begin{split}
g_{v,p}  & = Nn \tilde{v}^T \bar{C}_{S,p}^{-1}\tilde{v} \text{,}  \\
 \; \tilde{v} & =  \bar{C}_{S,p}^{'}v/Nn  \text{,} \; \\ 
 \bar{C}_{S,p} & =  \bar{C}_{S,p}^{'T} \bar{C}_{S,p}^{'}/Nn.
\end{split}
\end{align}
Then, due to \cite{Das-2011}, the submodularity-ratio $\gamma_{U\text{,}k\text{,}p}'$ of $g_{v\text{,}p}$ is bounded by
\begin{align}\notag
\begin{split}
\gamma_{U\text{,}k\text{,}p}' \geq \lambda_{\text{min}}(\bar{C}_p\text{,}\, k + |U|) \geq \lambda_{\text{min}}(\bar{C}_p) \, \text{.}
\end{split}
\end{align}

Here, we are ready to show our further result, which is helpful to acquire the provable optimality bound of the proposed algorithm. Concisely, the submodularity-ratio of $f$ is bounded by $\gamma_{U\text{,}k}'$.

\begin{theorem}\label{submodularity-ratio-f1}
\emph{The submodularity-ratio $\gamma_{U\text{,}k}$ of $f$ is bounded by}
	\begin{align}\notag
	\begin{split}
	\gamma_{U\text{,}k}   \geq  （\mathop{\emph{min}}\limits_{p}）_{p \in \mathcal{P}} \lambda_{\emph{min}}(\bar{C}_p\text{,}\, k + |U|)
	\geq （\mathop{\emph{min}}\limits_{p}）_{p \in \mathcal{P}} \lambda_{\emph{min}}(\bar{C}_p)
	\end{split} \, .
	\end{align}
\end{theorem}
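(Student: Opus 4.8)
The plan is to decompose the metric $f$ into a sum of per-component distance functions indexed by a \emph{fixed} index set, and then to show that the submodularity-ratio of such a sum is controlled by its worst component. First I would record the key structural fact that $\hat{\mathcal{A}}_p$ does not depend on the \emph{leader set} $S$, so the collection of eigenpairs $(\lambda_i, v_i)$ with $\text{Re}(\lambda_i(\hat{\mathcal{A}}_p)) \geq 0$ is the same for every $S$; only $\hat{B}_S = D \otimes I_n$, and hence the controllability matrix $C_p(\hat{\mathcal{A}}_p, \hat{B}_S)$, varies with $S$. Consequently $f$ is a finite sum $f = \sum_{p \in \mathcal{P}} \sum_{i:\text{Re}(\lambda_i(\hat{\mathcal{A}}_p)) \geq 0} f_{v_i, p}$ of set functions indexed by the fixed set of mode--eigenvector pairs, and via the identity $f_{v,p} = 1 - g_{v,p}$ each summand is an affine image of the scalar function $g_{v_i, p}$ whose submodularity-ratio is already available from \cite{Z. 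Liu} and \cite{Das-2011}.

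Next I would invoke the bound recalled immediately above the statement: for every $p \in \mathcal{P}$ and every admissible $U, k$, the submodularity-ratio $\gamma'_{U,k,p}$ of $g_{v,p}$ satisfies $\gamma'_{U,k,p} \geq \lambda_{\text{min}}(\bar{C}_p, k+|U|) \geq \lambda_{\text{min}}(\bar{C}_p)$, and I would stress that this lower bound is independent of the particular eigenvector $v_i$. Since $g_{v_i, p}(\varnothing) = 0$ and each $g_{v_i,p}$ is nondecreasing in $S$ (enlarging $S$ enlarges $\text{span}(C_p(\hat{\mathcal{A}}_p, \hat{B}_S))$), the coverage form of the metric, equivalently $f(\varnothing) - f = \sum_{p} \sum_{i} g_{v_i,p}$, is a nonnegative nondecreasing sum to which Definition \ref{submodularity-ratio} applies directly.

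The core step is a ``sum'' estimate for the submodularity-ratio. Writing the numerator and denominator of the ratio in Definition \ref{submodularity-ratio} for the sum $\sum_j g_j$, with $j$ ranging over the mode--eigenvector pairs, both split termwise. I would then apply, for the specific $W, S$ under consideration, the per-component inequality $\sum_{l \in S}\bigl(g_j(W \cup \{l\}) - g_j(W)\bigr) \geq \gamma'_{U,k,p(j)}\,\bigl(g_j(W \cup S) - g_j(W)\bigr)$, which holds because $\gamma'_{U,k,p(j)}$ is defined as the minimum of that ratio. Because every marginal increment $g_j(W \cup S) - g_j(W)$ is nonnegative by monotonicity, replacing each $\gamma'_{U,k,p(j)}$ by the uniform lower bound $\min_{p \in \mathcal{P}} \lambda_{\text{min}}(\bar{C}_p, k+|U|)$ only weakens the inequality; summing over $j$ yields that the combined numerator dominates $\bigl(\min_{p \in \mathcal{P}} \lambda_{\text{min}}(\bar{C}_p, k+|U|)\bigr)$ times the combined denominator. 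Taking the minimum over all admissible $W, S$ gives $\gamma_{U,k} \geq \min_{p \in \mathcal{P}} \lambda_{\text{min}}(\bar{C}_p, k+|U|)$, while the second inequality $\lambda_{\text{min}}(\bar{C}_p, k+|U|) \geq \lambda_{\text{min}}(\bar{C}_p)$ follows at once from the definition of $\lambda_{\text{min}}(\bar{C}_p, k+|U|)$ as a minimum of $\lambda_{\text{min}}(\bar{C}_{S,p})$ taken over larger index sets.

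I expect the main obstacle to be the bookkeeping around direction and nonnegativity rather than any deep estimate: one must pass cleanly from the nonincreasing distance metric $f$ to its nondecreasing coverage counterpart so that Definition \ref{submodularity-ratio} is applied to a monotone nonnegative function, and then justify the crucial averaging step $\sum_j \gamma_j a_j \geq (\min_j \gamma_j)\sum_j a_j$, whose validity rests precisely on $a_j = g_j(W \cup S) - g_j(W) \geq 0$ for each $j$. The algebraic identities relating $g_{v,p}$ to $\bar{C}_{S,p}$ and $\tilde{v}$ are routine and can simply be cited, so the argument reduces to the sum decomposition together with the uniform lower bound on the component ratios.
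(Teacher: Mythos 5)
Your proposal is correct and follows essentially the same route as the paper's proof: decompose $f$ into the per-mode, per-eigenvector components $f_{v_i,p}$ (equivalently the coverage functions $g_{v_i,p}$, whose ratios coincide since $f_{v,p}=1-g_{v,p}$ flips the sign of numerator and denominator alike), bound each component's submodularity-ratio by $\lambda_{\text{min}}(\bar{C}_p,\,k+|U|)$ via the cited result of Das--Kempe, and combine with a uniform minimum over $p$. The only difference is presentational — the paper chains nested $\min$-of-ratios (mediant) inequalities directly on $f$, whereas you cross-multiply $\sum_{l\in S}\bigl(g_j(W\cup\{l\})-g_j(W)\bigr)\geq \gamma'_{U,k,p(j)}\bigl(g_j(W\cup S)-g_j(W)\bigr)$ and sum, which handles zero marginal increments slightly more cleanly but is mathematically the same step.
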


\begin{proof}
	By definition,
	\begin{align}\notag
	\begin{split}
	\gamma_{U\text{,}k\text{,}p}'  = \mathop{\text{min}}\limits_{ \substack{W \subseteq U \\ W \cap S = \varnothing \\ |S| \leq k} } \frac{\sum_{l\in S}(g_{v\text{,}p}(W\cup \{l\}) - g_{v\text{,}p}(W))}{g_{v\text{,}p}(W\cup S  ) - g_{v\text{,}p}(W)}\\
	= \mathop{\text{min}}\limits_{ \substack{W \subseteq U \\ W \cap S = \varnothing \\ |S| \leq k }} \frac{\sum_{l\in S}(f_{v\text{,}p}(W\cup \{l\}) - f_{v\text{,}p}(W))}{f_{v\text{,}p}(W\cup S  ) - f_{v\text{,}p}(W)} .
	\end{split}
	\end{align}
	Then, we derive the submodularity-ratio $\gamma_{U\text{,}k}$ of $f$ bounded by $\gamma_{U\text{,}k}'$. It is explicit that
	\begin{align}\notag
	\begin{split}
	f = \sum_{p \in \mathcal{P}}\sum_{i:\text{Re}(\lambda_i(\hat{A}_p))\geq 0} f_{v_i,p} \, .
	\end{split}
	\end{align}
	Then, we obtain
	\begin{align}\notag
		\gamma_{U\text{,}k} & =  \mathop{\text{min}}\limits_{ \scriptstyle  W \subseteq U \atop {\scriptstyle  W \cap S = \varnothing \atop {\scriptstyle |S| \leq k \atop}}  } \frac{    \sum_{l \in S}(f(W \cup \{ l\})) - f(W)}{f(W \cup S) - f(W)}\\
		& \geq  \mathop{\text{min}}\limits_{ \scriptstyle  W \subseteq U \atop {\scriptstyle  W \cap S = \varnothing \atop {\scriptstyle |S| \leq k \atop}} } \mathop{\text{min}}\limits_{p} \frac{\sum_{i:\text{Re}_{f}} \sum_{l \in S} f_{v_i\text{,} p}^{\bigtriangledown}   }{\sum_{i:\text{Re}_{f}} \sum_{l \in S} 	f_{v_i\text{,} p\text{,}S}^{\bigtriangledown}}  \notag \\
		& \geq   \mathop{\text{min}}\limits_{ \scriptstyle  W \subseteq U \atop {\scriptstyle  W \cap S = \varnothing \atop {\scriptstyle |S| \leq k \atop}} } \mathop{\text{min}}\limits_{p} \mathop{\text{min}}\limits_{i:\text{Re}_{f}}  \frac{ \sum_{l \in S} f_{v_i\text{,} p}^{\bigtriangledown} }{ \sum_{l \in S} 	f_{v_i\text{,} p\text{,}S}^{\bigtriangledown}}\notag \\
		& \geq  \mathop{\text{min}}\limits_{p} \mathop{\text{min}}\limits_{ \scriptstyle  W \subseteq U \atop {\scriptstyle  W \cap S = \varnothing \atop {\scriptstyle |S| \leq k \atop}} } \frac{\sum_{l\in S}(f_{v\text{,}p}(W\cup \{l\}) - f_{v\text{,}p}(W))}{f_{v\text{,}p}(W\cup S  ) - f_{v\text{,}p}(W)} \notag \\
		& \geq  \mathop{\text{min}}\limits_{p}\lambda_{\text{min}}(\bar{C}_p\text{,}\, k + |U|) \notag  \\
		& \geq \mathop{\text{min}}\limits_{p} \lambda_{\text{min}}(\bar{C}_p) \text{,} \notag
	\end{align}
	where
	\begin{align}\notag
	\begin{split}
	\text{Re}_{f} & = \lambda_i(\hat{A}_p))\geq 0\text{,} \;  p\in \mathcal{P} \, \text{,} \\
	f_{v_i\text{,} p}^{\bigtriangledown} & = f_{v_i\text{,}p}(W\cup \{ l\}) - f_{v_i\text{,}p}(W) \, \text{,} \\
	f_{v_i\text{,}p\text{,}S}^{\bigtriangledown} & = f_{v_i\text{,}p}(W \cup \{ S\}) - f_{v_i\text{,}p}(W) \, \text{.}
	\end{split}
	\end{align}
	The proof is complete.
\end{proof}

 Here, we finish total preparation for our algorithm, which is utilized for figuring out the solution to $\check{\mathcal{P}} 1$ with the provable optimality bound.


\subsection{The Leader Selection Algorithm}

In this subsection, we propose a heuristic algorithm with the greedy rule to select a minimum-size \emph{leader set} $S$, which leads to $f = 0$ as well as ensures \eqref{Theorem_7}-\eqref{Theorem_8} in Theorem \ref{unstable-Xiang}. Thereby, the tracking of a reference signal can be guaranteed with a set of given TDDT. Specially, in order to ensure the tracking, it is a necessary condition that each agent is reachable in the union of the directed interaction graphs \cite{Ren-2005-tac}. Therefore, we take $S_{0}$ as the index set of agents, which are unreachable in the union of the directed interaction graphs. Then, we consider $S_{0}$ as the initial \emph{leader set} in the algorithm, $T_{\text{min-max}} = \{ \tau_p^{\text{min}}\text{,}\tau_p^{\text{max}}\}$ as the set of given TDDT, and $Q = \{l_p$, $\mu_p$, $\eta
_p\}$ as the set of parameters in Theorem \ref{unstable-Xiang}, $\forall p \in \mathcal{P}$.

\begin{algorithm}[H]
	\caption{Algorithm for selection of a minimum-size \emph{leader set} $S$ with a set of given TDDT to assure the tracking of a reference signal}
	\begin{algorithmic}[1]\label{algorithm}
		\Require  The agents index set $ \Omega$, the metric $f$, a constant $k$, $T_{\text{min-max}}$ and $Q$
		\Ensure The \emph{leader set} $S$
		\Procedure{MinSet $(S, f)$}{}
		\State \textbf{Initialization:} \;\;\;\; $S  \leftarrow S_{0}  $ ,  $z  \leftarrow  0  $
		\quad  \quad  \While{$f > 0$ }
		\quad  \quad       \For{ $v_x$ $\in \Omega \backslash  S$}
		\quad  \quad      \State $F_{v_x}$  $\leftarrow$  $f(S) - f(S\cup \{v_x\})    $ 
		\quad  \quad    \EndFor\\
		\quad  \quad \quad \quad $v^*$   $\leftarrow$  arg $\text{max}_{v_x}$  $F_{v_x}$\\
		\quad  \quad\quad \quad  $S$   $\leftarrow$   $S \cup \{v^*\}$
		\quad  \quad    \EndWhile\\
		\quad  \quad  \textbf{if}  $|S| \leq k$ and \eqref{Theorem_7}-\eqref{Theorem_8} hold with a set of $K_p$ \\
		\quad \quad \quad \textbf{return} $S$\\
		\quad  \quad  \textbf{else} \\
		\quad  \quad  \quad $z = z + 1$ \\
		\quad  \quad  \quad  switch to next step \\
		\quad  \quad  \textbf{if} $f = 0$ with a new set $Q$  \\
		\quad  \quad  \quad switch to step 10  \\
		\quad  \quad   \textbf{else}  \\
		\quad  \quad   \quad switch to next step  \\
		\quad  \quad   \textbf{if} $z$ reaches a specified maximum number $z_{\text{max}}$ \\
		\quad  \quad   \quad  \textbf{return} ``None with such a $T_{\text{max-min}}$"  \\
		\quad  \quad   \textbf{else}  \\
		\quad  \quad   \quad switch to step 3 with $S = S_{0}$
		\EndProcedure
	\end{algorithmic}
\end{algorithm}

\begin{remark}
	\emph{From the pseudo-code of Algorithm 1, a candidate \emph{leader set} $S$ is obtained after step 9, and then $K_p$ can be acquired via \eqref{Theorem_2} with the configuration matrix $D$. It is not accessible to acquire a feasible \emph{leader set} with the arbitrarily given $T_{\text{max-min}}$. In accordance with \cite{Xiang-2014}, by switching behaviors, the ability of the tracking error state compensation is limited, and then the solution returned by the algorithm may be none. Besides, the lower bound $k_{\text{min}}$ of the integer $k$ depends on the number of unreachable agents in the union of the directed interaction graphs before leader selection and $k_{\text{min}} = 1$ otherwise. There is a predefined condition that is required to be satisfied:} $\; \mathop{\emph{max}}\limits_{ p \in \mathcal{P} } \lambda_{\emph{r}}(\mathcal{\hat{A}}_p) \ge 0 \text{,}$
\setlength{\parindent}{0em}\emph{which captures the rationality of the selection scheme. If this condition does not hold, then it causes $f = 0$ when $D = \varnothing$, and it is impossible to ensure the tracking obviously. 
	}
\end{remark}

On the consideration of Theorem \ref{submodularity-ratio-f1}, the provable optimality bound of Algorithm 1 is given as follows, which is served as measuring the optimality of resulting \emph{leader set}.

\begin{proposition}\label{proposition-1}
	\emph{Let the optimal solution of $\check{\mathcal{P}} 1$ be represented by $S^*$, where $|S^*| \geq k_{\text{min}}$. Consider $S$ = $\{s_1\text{,}\,s_2\text{,}\,...\,\text{,}\,s_{|S|}\}$ as the solution returned by the Algorithm 1 in the first $|S|$ iterations. Then, we have}
	\begin{align}\notag
	\begin{split}
	e^{- \frac{k_{\emph{min}}}{k} \gamma_{\Delta}} f(\varnothing) \geq  f(S^{t-1})\, \text{,}
	\end{split}
	\end{align}
	\emph{where $\gamma_{\Delta}$ = $（\mathop{\text{min}}\limits_{p}）_{p \in \mathcal{P}} \lambda_{\text{min}}(\bar{C}_p\text{,}\,2|S|)$ and $S^{t - 1}$ denotes the result of Algorithm 1 at the second-to-last iteration.}
\end{proposition}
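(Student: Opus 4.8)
The plan is to run the classical greedy analysis for (approximately) submodular maximization, recast for the nonincreasing metric $f$ that Algorithm \ref{algorithm} drives to zero. First I would record the two structural facts about $f$ that make the argument work: $f$ is nonnegative and \emph{monotone nonincreasing} in the \emph{leader set} (enlarging the span of the controllability matrix can only shrink each squared distance $f_{v,p}$), and $f(S^*)=0$. Monotonicity then forces $f(S^t\cup S^*)\le f(S^*)=0$, so $f(S^t\cup S^*)=0$ at every greedy iterate $S^t$; this ``full-coverage'' identity is what anchors the recursion.

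Next I would fix a uniform lower bound on the submodularity-ratio valid at every iteration. By Theorem \ref{submodularity-ratio-f1}, $\gamma_{U,k}\ge \min_{p}\lambda_{\text{min}}(\bar{C}_p,k+|U|)$, and $\lambda_{\text{min}}(\bar{C}_p,\cdot)$ is nonincreasing in its cardinality argument (it is a minimum taken over progressively larger index sets). Taking $U=S=S^{|S|}$ (the nested greedy iterates satisfy $S^t\subseteq S$) and the relevant budget equal to $|S^*|$, and using $|S^*|\le|S|\le k$ so that $|S^*|+|U|\le 2|S|$, I obtain $\gamma_{U,|S^*|}\ge\min_p\lambda_{\text{min}}(\bar{C}_p,2|S|)=\gamma_{\Delta}$. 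Thus $\gamma_{\Delta}$ lower-bounds the ratio at every step, which is exactly why $2|S|$ appears in the statement.

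Then I would carry out the per-step contraction. Applying the submodularity-ratio inequality (Definition \ref{submodularity-ratio}) with $W=S^t$ and added set $S^*$, together with $f(S^t\cup S^*)=0$, gives $\gamma_{\Delta}f(S^t)\le\sum_{l\in S^*}\bigl(f(S^t)-f(S^t\cup\{l\})\bigr)$. Since the greedy rule selects $v^*$ maximizing the marginal decrease $f(S^t)-f(S^t\cup\{v\})$, this maximum is at least the average of the $|S^*|$ terms on the right, so $f(S^t)-f(S^{t+1})\ge\frac{\gamma_{\Delta}}{|S^*|}f(S^t)\ge\frac{\gamma_{\Delta}}{k}f(S^t)$, using $|S^*|\le k$. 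Rearranging yields $f(S^{t+1})\le\bigl(1-\frac{\gamma_{\Delta}}{k}\bigr)f(S^t)$, and unrolling from $S^0=\varnothing$ gives $f(S^t)\le\bigl(1-\frac{\gamma_{\Delta}}{k}\bigr)^{t}f(\varnothing)\le e^{-\gamma_{\Delta}t/k}f(\varnothing)$. Evaluating at the second-to-last iterate $t=|S|-1$ and bounding the exponent through the feasibility lower bound $k_{\text{min}}$ on the number of iterations (any feasible set, in particular the returned one, has cardinality at least $k_{\text{min}}$) produces $f(S^{t-1})\le e^{-\frac{k_{\text{min}}}{k}\gamma_{\Delta}}f(\varnothing)$, as claimed.

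The main obstacle I expect is the sign bookkeeping needed to apply the submodularity-ratio bound of Theorem \ref{submodularity-ratio-f1} to the \emph{decreasing} metric $f$ rather than to its complementary coverage function, and rigorously justifying the uniform constant $\gamma_{\Delta}$ — specifically the monotonicity of $\lambda_{\text{min}}(\bar{C}_p,\cdot)$ and the cardinality estimate $|S^*|+|U|\le 2|S|$, plus pinning down where $k_{\text{min}}$ enters the final exponent. The remaining steps (the averaging afforded by the greedy rule and the geometric unrolling) are routine once the contraction factor $1-\gamma_{\Delta}/k$ is established.
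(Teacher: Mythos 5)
Your middle steps are sound and in places cleaner than the paper's: the monotonicity of $f$, the identity $f(S^t\cup S^*)=0$, the averaging step $f(S^t)-f(S^{t+1})\ge\frac{\gamma_{\Delta}}{|S^*|}f(S^t)$ afforded by the greedy rule, and your justification of the constant $2|S|$ via $|S^*|+|U|\le 2|S|$ together with the (interlacing) monotonicity of $\lambda_{\text{min}}(\bar{C}_p,\cdot)$ all check out; the paper obtains the same constant by taking $U=S$ and budget $|S|$, so that $k+|U|=2|S|$. But your final step contains a genuine off-by-one gap that the claimed constant does not survive. Unrolling your contraction to the second-to-last iterate yields $f(S^{t-1})\le e^{-\gamma_{\Delta}(|S|-1)/k}f(\varnothing)$ (or, keeping the sharper denominator, exponent $\gamma_{\Delta}(|S|-1)/|S^*|$). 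Feasibility only gives $|S|\ge|S^*|\ge k_{\text{min}}$, hence $|S|-1\ge k_{\text{min}}-1$, and the needed inequality $(|S|-1)/|S^*|\ge k_{\text{min}}/k$ can genuinely fail: if greedy happens to be optimal with $|S|=|S^*|=k_{\text{min}}=k$, then $(|S|-1)/|S^*|=(k-1)/k<k_{\text{min}}/k$. So your route, as written, proves $f(S^{t-1})\le e^{-\gamma_{\Delta}(k_{\text{min}}-1)/k}f(\varnothing)$, not the stated bound.

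The paper's proof runs the recursion the other way around, and that role swap is exactly what makes its hypotheses line up with the stated constant. In Definition \ref{submodularity-ratio} it uses the greedy output $S$ itself, not $S^*$, as the covering set: with $W$ the current greedy prefix and added set $S$, the denominator is $f(W)-f(S)=f(W)$, giving a per-step contraction $(1-\gamma_{\Delta}/|S|)$, which it iterates $|S^*|$ times, then passes to $f(S^{t-1})$ by monotonicity and applies $\log x\ge 1-1/x$ to obtain $\log\frac{f(\varnothing)}{f(S^{t-1})}\ge\gamma_{\Delta}\frac{|S^*|}{|S|}$. The exponent $|S^*|/|S|$ is then bounded below by $k_{\text{min}}/k$ directly from $|S^*|\ge k_{\text{min}}$ and $|S|\le k$, with no off-by-one. (The paper's monotonicity step $f(\{s_1,\ldots,s_{|S^*|}\})\ge f(S^{t-1})$ itself tacitly needs $|S^*|\le|S|-1$, so its argument is also delicate precisely when greedy is optimal, but the placement of the two cardinalities in its exponent matches the claimed constant, while yours does not.) To repair your proof you would have to swap which set plays the covering role, as the paper does. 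One shared caveat: invoking Definition \ref{submodularity-ratio} with $W$ intersecting the covering set violates the disjointness requirement $W\cap S=\varnothing$; both your argument and the paper's should pass to $S^*\setminus S^t$ (respectively $S\setminus W$), noting that the intersecting elements contribute zero marginal decrease.
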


\begin{proof}
	By referring to \cite{Z. Liu} as well as the definition of the submodularity-ratio, we obtain
	\begin{align}\notag
	\begin{split}
	\frac{\sum_{l\in S}f(\varnothing) - f(\{l\})}{f(\varnothing) - f(S)} \geq \gamma_{S\text{,}|S|} \geq  \gamma_{\Delta} .
	\end{split}
	\end{align}
	Combining with the greedy rule of Algorithm 1, we have
	\begin{align}\notag
	\begin{split}
	|S|(f(\varnothing) - f(s_1)) \geq \gamma_{\Delta} (f(\varnothing) - f(S)) \, \text{,}
	\end{split}
	\end{align}
	meaning that
	\begin{align}
	\begin{split}
	(1 - \frac{\gamma_{\Delta}}{|S|})(f(\varnothing) - f(S)) \geq f(s_1) - f(S)  \, \text{.} \label{optimality-1}
	\end{split}
	\end{align}
	Subsequently, we derive that
	\begin{align}\label{optimality-2}
	\begin{split}
	|S| (f(s_1) - f({s_1\text{,}\, s_2})) & \geq \gamma_{\Delta}(f(s_1) - f(S\cup \{s_1\})) \\
	& \geq \gamma_{\Delta}( f(s_1) - f(S)) \, \text{.}
	\end{split}
	\end{align}
	Furthermore, $\eqref{optimality-2}$ is equivalent to
	\begin{align}\notag
	\begin{split}
	(1 - \frac{\gamma_{\Delta}}{|S|}) ( f(s_1) - f(S)) \geq f({s_1\text{,}\, s_2})  - f(S) .
	\end{split}
	\end{align}
	Considering the $\eqref{optimality-1}$ jointly, we obtain
	\begin{align}\notag
	\begin{split}
	(1 - \frac{\gamma_{\Delta}}{|S|})^2(f(\varnothing) - f(S)) \geq  f({s_1\text{,}\,s_2}) - f(S) .
	\end{split}
	\end{align}
	Thus, by induction method, it is explicit that
	\begin{align}\notag
	\begin{split}
	(1 - \frac{\gamma_{\Delta}}{|S|})^{|S^*|}(f(\varnothing) - f(S)) \geq f(s_1\text{,}\, s_2\text{,}\,...\,\text{,}\,s_{|S^*|}) - f(S)  \, \text{,}
	\end{split}
	\end{align}
	signifying further,
	\begin{align}\notag
	\begin{split}
	(1 - \frac{\gamma_{\Delta}}{|S|})^{|S^*|}(f(\varnothing) - f(S)) \geq  f(S^{t - 1}) - f(S) \, \text{.}
	\end{split}
	\end{align}
	It is noted that $f(S) = 0$, and then we obtain
	\begin{align}\notag
	\begin{split}
	|S^*|\text{log}(1 - \frac{\gamma_{\Delta}}{|S|}) \geq \text{log} \frac{f(S^{t - 1})}{f(\varnothing)} .
	\end{split}
	\end{align}
	Based on the fact that $ \text{log}(x) \geq  1 - 1/x$, $\forall x \geq 1$, we derive
	\begin{align}\notag
	\begin{split}
	\text{log} \frac{f(\varnothing)}{f(S^{t - 1})}  \geq   |S^*|\text{log}(1 - \frac{\gamma_{\Delta}}{|S|})^{-1} \geq |S^*| \frac{\gamma_{\Delta}}{|S|} \, \text{,}
	\end{split}
	\end{align}
	\setlength{\parindent}{0em}implying that
	\begin{align}\notag
	\begin{split}
	\gamma_{0} = \frac{1}{\gamma_{\Delta}} \text{log} \frac{f(\varnothing)}{f(S^{t - 1})}   \geq  \frac{|S^*|}{|S|} .
	\end{split}
	\end{align}
	Furthermore, due to $|S^*| \geq k_{\text{min}}$ and $|S| \leq k$, we have
	\begin{align}\label{bound}
	\begin{split}
   			e^{- \frac{k_{\text{min}}}{k} \gamma_{\Delta}} f(\varnothing) 	\geq	e^{-\frac{|S^*|}{|S|}\gamma_{\Delta}}f(\varnothing)  \geq  f(S^{t-1}).
	\end{split}
	\end{align}
	The proof is complete.
\end{proof}

\setlength{\parindent}{1em}It is noticeable that the optimality bound $\gamma_{0} \in (0\text{,}\,1)$ can be calculated after leader selection. Besides, this bound can be as small as possible under certain parameter matrices, which implies that the solution approximates the optimal one nearly. In \eqref{bound}, when $\gamma_{\Delta}$ is larger, which points out that the submodularity of $f$ appears significantly, the value of $f(S^{t-1})/f(\varnothing)$ is smaller. Thus, it is inferred that the increment of adding a leader is larger, which is beneficial to satisfy $f = 0$ with less leaders.

Actually, there exists conservativeness for Algorithm 1, which is analyzed in the next section. Here, we present another one to reduce the conservativeness.  

\begin{algorithm}[H]
	\caption{Algorithm for selection of a minimum-size \emph{leader set} $\tilde{S}$ with a set of given TDDT to assure the tracking of a reference signal}
	\begin{algorithmic}[1]
		\Require   The agents index set $ \Omega$, the metric $f$, a constant $k$, $T_{\text{min-max}}$ and $Q$
		\Ensure The \emph{leader set} $\tilde{S}$
		\Procedure{MinSet $(\tilde{S}, f)$}{}
		\State \textbf{Initialization:} \;\;\;\; $\tilde{S}  \leftarrow S_{0}  $ ,  $z  \leftarrow  0  $ 
		\quad  \quad \While{$|\tilde{S}| \le k$}
		\quad  \quad \While{$z < z_{\text{max}}$ }  \\
		\quad  \quad   \quad \quad \textbf{If} \eqref{Theorem_5}-\eqref{Theorem_8} hold with a set of $K_p$ \\
		\quad  \quad      \quad \quad \quad \textbf{return} $\tilde{S}$\\
		\quad  \quad    \quad \quad \textbf{else} \\
		\quad  \quad	\quad	\quad \quad 	$z = z + 1$ with a new set $Q$	
		\quad  \quad    \EndWhile \\
		\quad  \quad     \textbf{If} $f = 0$ \\
		\quad  \quad    \quad  \textbf{return} ``None with such a $T_{\text{max-min}}$"  \\
		\quad  \quad    \textbf{else} 
		\quad \quad \For{ $v_x$ $\in \Omega \backslash  \tilde{S} $}
		\quad  \quad      \State $F_{v_x}$  $\leftarrow$  $f(\tilde{S}) - f(\tilde{S}\cup \{v_x\})    $ 
		\quad  \quad    \quad \quad \EndFor \\
		\quad  \quad \quad \quad \quad $v^*$   $\leftarrow$  arg $\text{max}_{v_x}$  $F_{v_x}$\\
		\quad  \quad \quad \quad \quad  $\tilde{S}$   $\leftarrow$   $\tilde{S} \cup \{v^*\}$  \\
		\quad  \quad \quad \quad \quad $z = 0$
		\quad  \quad    \EndWhile 
		\EndProcedure
	\end{algorithmic}
	
\end{algorithm}

The complexity of Algorithm 2 is $O(\xi z_{\text{max}}n^3)$, where $\xi = |\tilde{S}| - |S_{0}| + 1 \ge 1$, while the complexity of Algorithm 1 is $O(z_{\text{max}}n^3)$, which shows Algorithm 1 requires less running time when $\xi > 1$.


\section{EXAMPLES}

In this section, numerical examples are provided to verify the effectiveness for Algorithm 1. Firstly, we give the description for cases, including parameters settings. Then, the results are shown, composing of the tracking error evolution curve of \emph{followers}, comparison with two selection methods and the relation between number of leaders needed and the dwell time. Finally, we describe the conservativeness analysis.
\subsection{Cases Statement}

We consider a high-order linear multi-agent network with six agents and three predefined topologies, shown in Fig. 1. The individual self-dynamical matrix  $A$ \cite{Ni-2013-automatica} is
$$
\begin{gathered}
A =
\begin{pmatrix} 0.4147 & -0.4087 & -0.1287 \\ 0.3802 & -0.3380 & -0.3305 \\ 0.1313 & -0.7076 &  0.0233  \end{pmatrix} \text{,}
\quad
\end{gathered}
 $$
where it is not Hurwitz stable, $\lambda_{1}(A) = -0.50$, $\lambda_{2}(A) = 0.30 + 0.10i  $, $\lambda_{3}(A) = 0.30 - 0.10i $. The initial state of every agent is generated randomly within the range $(-100, 100)$. Furthermore, the switching law $\sigma(t)$ is aperiod.

\begin{figure}[H]\label{topology}
	\centering
	\captionsetup{justification = raggedright}
	\captionsetup{font = {footnotesize}}
	\includegraphics[height=3.5cm,width=8cm]{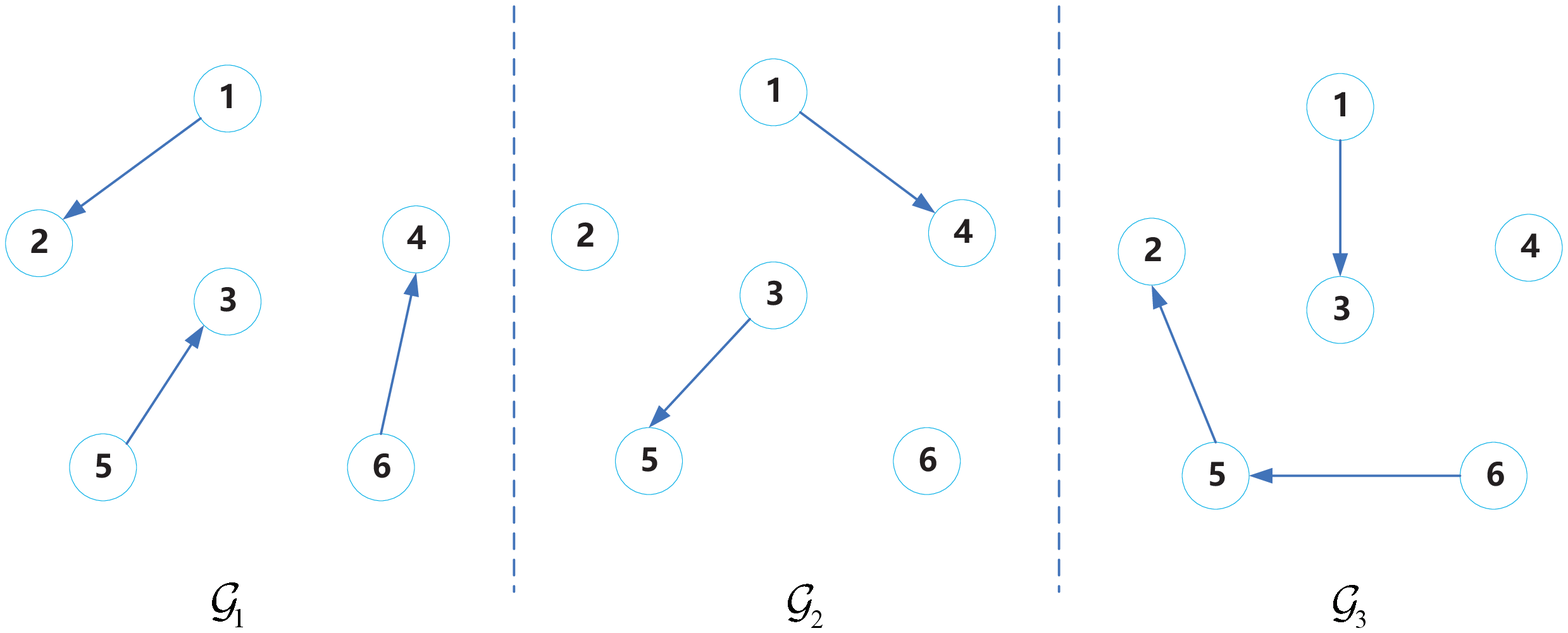}
	\caption{Three predefined interaction topologies. It is intuitive that there exist agents (agent 1 and agent 6) unreachable in the union of the directed interaction topologies before leader selection. After Algorithm 1, every mode satisfies $\lambda_{\text{r}}(\mathcal{A}_p) > 0$ based on the parameters in our simulation, where $S = \{1,5,6\}$. Specifically, $\lambda_{\text{r}}(\mathcal{A}_{\mathcal{G}_1}) \approx 0.05$, $\lambda_{\text{r}}(\mathcal{A}_{\mathcal{G}_2}) \approx 0.30$, $\lambda_{\text{r}}(\mathcal{A}_{\mathcal{G}_3}) \approx 0.30$, it signifies that each mode is unstable.}
\end{figure}
\subsection{Leader Selection Numerical Examples}

\setlength{\parindent}{1em} In this subsection, we depict the results of numerical examples for Algorithm 1. In the first place, we set the initial range of a set of given TDDT as $\tau_{\mathcal{G}_1} \in [1.00\text{,}\,2.00]$, $\tau_{\mathcal{G}_2} \in [0.50\text{,}\,1.50]$, $\tau_{\mathcal{G}_3} \in [0.50\text{,}\, 1.50]$. $k = 3$. In view of Algorithm 1, we acquire parameters as follows:

 \begin{equation}\notag
 \left\{
 \begin{array}{lr}
 l_{\mathcal{G}_1} = 3\text{,} \; \mu_{\mathcal{G}_1} = 0.03\text{,} \; \eta_{\mathcal{G}_1} = 2.0 \text{,} \; \\
 l_{\mathcal{G}_2} = 2\text{,} \; \mu_{\mathcal{G}_2} = 0.02\text{,} \; \eta_{\mathcal{G}_2} = 4.2 \text{,} \; \\
 l_{\mathcal{G}_3} = 2\text{,} \; \mu_{\mathcal{G}_3} = 0.04\text{,} \; \eta_{\mathcal{G}_3} = 2.8 \text{,} \;
 \end{array}
 \right.
 \end{equation}
 a \emph{leader set} $S$ = $\{1\text{,}\, 5\text{,}\, 6\}$, and the practical range of TDDT:
 \begin{equation}\notag
 \left\{
 \begin{array}{lr}
 \tau_{\mathcal{G}_1}^* \in [1.60\text{,}\,1.74]  \text{,}\\
 \tau_{\mathcal{G}_2}^* \in [0.83\text{,}\,0.92] \text{,} \\
 \tau_{\mathcal{G}_3}^* \in [0.94\text{,}\,1.13] \text{,}
 \end{array}
 \right.
 \end{equation}
 where the unit of time is \emph{second}. Due to \eqref{Theorem_2}, we acquire the input gain matrices: 
 
 $$
 \begin{gathered}
 K_{\mathcal{G}_1} =
 \begin{pmatrix} 0.2275 & -0.0017 & 0.0002 \\ -00017 & 0.1399 & -0.0604 \\ 0.0002 & -0.0604 &  0.1819  \end{pmatrix} \text{,}
 \quad
 \end{gathered}
 $$
 
$$
\begin{gathered}
K_{\mathcal{G}_2} =
\begin{pmatrix} 1.2657 & -0.0143 & 0.0013 \\ -0.0143 & 0.5130 & -5190 \\ 0.0013 & -0.5190 &  0.8743  \end{pmatrix} \text{,}
\quad
\end{gathered}
$$

$$
\begin{gathered}
K_{\mathcal{G}_3} =
\begin{pmatrix} 1.2763 & -0.0003 & 0.0000 \\ -0.0003 & 1.2612 & -0.0104 \\ 0.0000 & -0.0104 &  1.2685  \end{pmatrix} \text{.}
\quad
\end{gathered}
$$

\setlength{\parindent}{0em}In what follows, the tracking error evolution curves of \emph{followers} are shown in Fig. 2. Although we set no requirement on connectivity, based on Algorithm 1, after a number of trails, we find that each agent is reachable in one topology at least. This signifies that each agent is reachable in the union of the directed interaction graphs, which is the necessary topology condition for the realization of the tracking.  

 \begin{figure}[t]\label{evolution}
 	\centering
 	\captionsetup{justification = raggedright}
 	\captionsetup{font = {footnotesize}}
 	\includegraphics[height=13cm,width=8.5cm]{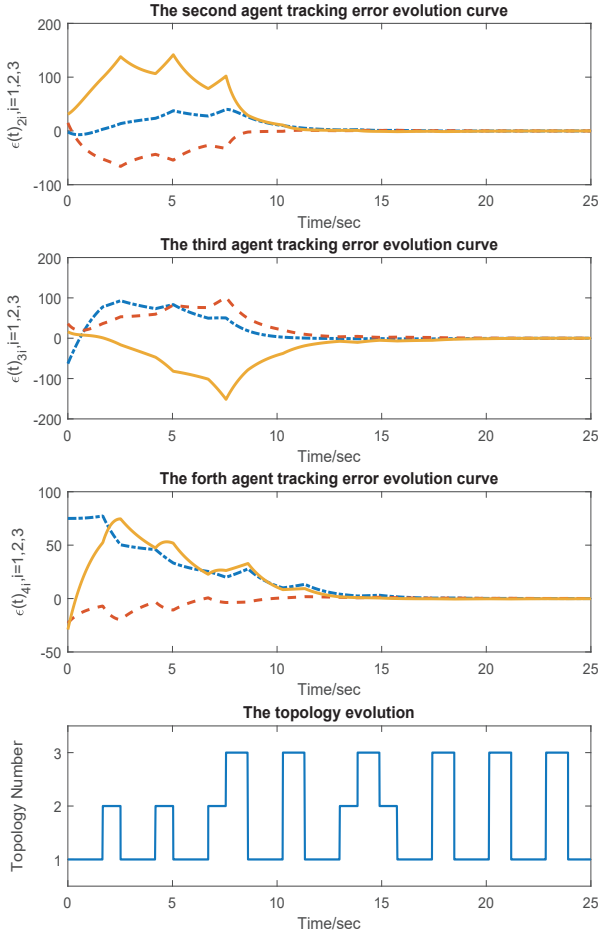}
 	\caption{Followers tracking error evolution. In view of the \emph{leader set} $S$ = $\{1, 5, 6\}$, the former three subgraphs depict the tracking error evolution process of agent $2$, $3$, and $4$ respectively. It is obvious to see that the switching law is aperiod. 
 	}
 \end{figure}

\setlength{\parindent}{1em} In the second place, we show the comparison with other two selection methods, where parameters are based on those that are mentioned above. We direct at greedy selection with the index of $f_{\text{max}} = \sum_{p \in \mathcal{P}}^{}$$\lambda_{\text{r}}(\mathcal{A}_p^{(2)})$, as well as random selection with $f$. To be precise, we construct the metric $f_{\text{max}}$ which is anticipated to be minimized, since it is an intuitive measure to fulfill the decisive factor \eqref{Theorem_2}. For greedy selection with $f_{\text{max}}$, we still draw on the proposed method, but the metric $f$ is replaced with $f_{\text{max}}$. For random selection, we refer to Algorithm 1, but we select a new leader from $\Omega \backslash S$ randomly rather than with the greedy rule. For the persuasiveness of this scheme, we take the expected value of 100 trails. The significance of this simulation between diverse three methods, lies in the comparison of optimization performance when we add an agent to $S$. Concretely, we explore $f_{\text{max}}$ from the perspective of the maximal real eigenvalue, while $\gamma$-submodular function $f$ from the Euclidean distance of controllability. The better performance implies $f = 0$ with less leaders. Then, the result is shown in Fig. 3. As is shown, the optimization performance of the proposed method is optimal comparatively.  

\begin{figure}[t]\label{comparison}
	\centering
	\captionsetup{justification = raggedright}
	\captionsetup{font = {footnotesize}}
	\includegraphics[height=4.2cm,width=7.0cm]{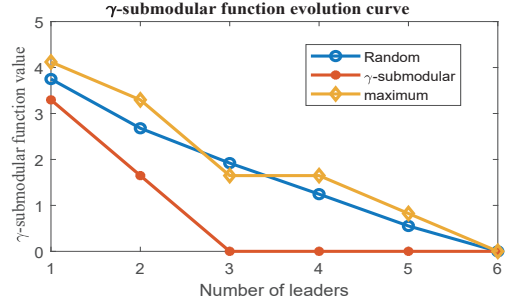}
	\caption{Comparison of three methods. \emph{Random} corresponds to random selection with the metric $f$, selecting a new leader randomly for each iteration. $\gamma$-\emph{submodular} represents our method. \emph{Maximum} denotes greedy selection, with the metric $f_{\text{max}}$, adding a leader to minimize the index. It is shown our algorithm is optimal comparatively, 
	Especially, it is of interest to mention that curves of $\gamma$-\emph{submodular} and \emph{maximum} almost coincide with each other, with a certain set of $K_p$, $p \in \mathcal{P}$. For instance, $K_{\mathcal{G}_1} = 0.45 \cdot I_{Nn}$, $K_{\mathcal{G}_2} = 3.15 \cdot I_{Nn}$, $K_{\mathcal{G}_2} = 2.75 \cdot I_{Nn}$. 
}
\end{figure}


\begin{figure}[t]\label{connection}
	\centering
	\captionsetup{justification = raggedright}
	\captionsetup{font = {footnotesize}}
	\includegraphics[height=4.2cm,width=7.6cm]{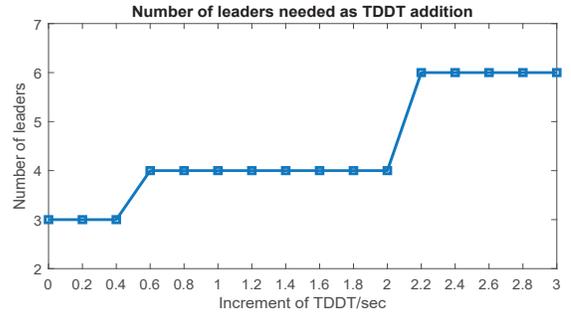}
	\caption{Relation between leader selection and TDDT. For brevity, we set the lateral axis as the increment of TDDT, which implies that the lateral axis value of each point represents the augmentation for the TDDT based on the parameters in the first numerical case.   }
\end{figure}

\renewcommand\arraystretch{2.5}
\begin{table*}[t] 
	\centering	
	\caption{COMPARISON AS NUMBER OF STABLE MODES ADDITION} 
	\begin{tabular}{p{5.17cm}<{\centering}||p{5.17cm}<{\centering}||p{5.17cm}<{\centering}} 
		\hline
		\hline
		$\lambda_{\text{r}}(\mathcal{A}_{\mathcal{G}_1}) < 0$, $\lambda_{\text{r}}(\mathcal{A}_{\mathcal{G}_2}) > 0$, 
		$\lambda_{\text{r}}(\mathcal{A}_{\mathcal{G}_3}) > 0$,& 	$\lambda_{\text{r}}(\mathcal{A}_{\mathcal{G}_1}) < 0$, $\lambda_{\text{r}}(\mathcal{A}_{\mathcal{G}_2}) < 0$, 
		$\lambda_{\text{r}}(\mathcal{A}_{\mathcal{G}_3}) > 0$, & 	$\lambda_{\text{r}}(\mathcal{A}_{\mathcal{G}_1}) < 0$, $\lambda_{\text{r}}(\mathcal{A}_{\mathcal{G}_2}) < 0$, 
		$\lambda_{\text{r}}(\mathcal{A}_{\mathcal{G}_3}) < 0$, \\ 
		\hline 
		$S = \{1\text{,}\; 5\text{,}\; 6\}$ & 	$S = \{1\text{,} \; 2 \text{,} \; 3 \text{,}\; 5\text{,} \; 6\}$  & 	$S = \{1\text{,} \; 2 \text{,} \; 3 \text{,}\; 4 \text{,}\; 5\text{,} \; 6\}$   \\
		\hline
		$\tau_{\mathcal{G}_1}' > 1.77\text{,} $ 	
		$\tau_{\mathcal{G}_2}' = 0.65\text{,} $
		$\tau_{\mathcal{G}_3}' = 0.85\text{,} $
		& 	
		$\tau_{\mathcal{G}_1}' > 1.56\text{,} $ 	
		$\tau_{\mathcal{G}_2}' > 0.54\text{,} $
		$\tau_{\mathcal{G}_3}' = 0.62\text{,} $ & 
		
		$\tau_{\mathcal{G}_1}' > 1.51\text{,} $ 	
		$\tau_{\mathcal{G}_2}' > 0.51\text{,} $
		$\tau_{\mathcal{G}_3}' > 0.51\text{,} $   \\
		\hline
		\hline
	\end{tabular}
\end{table*}

In the third place, we investigate the relation between the number of leaders needed and the TDDT. As the basis of the above-mentioned parameters configuration, we adjust the $\eta_p$ to alter the TDDT. Specifically, we operate Algorithm 1 at every turn adding 0.2 second. Thereby, we obtain the result shown in Fig. 4. This result shows the relationship between leader selection and the dwell time. 
It is straightforward to see that all the agents are required to be accessible to the reference signal, when the increment of the TDDT is over 2.2 second. In addition, through a number of trails, it is interesting to point out that there exists a proportion relation between the eligible TDDT. This means that each TDDT is not allowed to differ one other greatly. We consider that it is mainly due to the connectivity between predefined topologies as well as the feedback gains matrices.

In the last place, based on Algorithm 1, we investigate the comparison between different number of stable modes, including the result of \emph{leader set} and corresponding TDDT. The result is shown in the Table \uppercase\expandafter{\romannumeral1}. As is shown, when the number of stable modes increases, the number of leaders needed grows. However, when we consider each mode unstable, we require leaders with the minimum number to ensure the tracking.

\subsection{Conservativeness Analysis}

\setlength{\parindent}{1em} In this subsection, we illustrate the conservativeness analysis of Algorithm 1. In Fig. 1, there are two unreachable agents in the union of the directed interaction graphs, agent 1 and agent 6 respectively. We set $\{1\text{,}\, 6\}$ as the \emph{leader set}, by virtue of Lemma \ref{based-Xiang}, and then we can acquire a set of feasible TDDT to ensure the tracking. However, if we take the same parameters to execute Algorithm 1, we can not acquire the \emph{leader set} $\{1\text{,}\, 6\}$, but $\{1\text{,}\, 5\text{,}\,6\}$. Thus, we consider the conservativeness is due to without considering the impact of union of the directed interaction topologies in Theorem \ref{unstable-Xiang} and Lemma \ref{lemma-K}. Then, it causes that each part $f_p$ of the leader selection metric $f$ has to be satisfied separately, where
\begin{align}\notag
\begin{split}
f = \sum_{p\in \mathcal{P}}  f_p \text{,} \;f_p  = \sum_{i:\text{Re}_{f}}  \text{dist}^2(v_i, \text{span}(W(\hat{\mathcal{A}}_p, \hat{B}))) \text{.}
\end{split}
\end{align}

For instance, for the first consequence in  \emph{Leader Selection Simulation}:
	\begin{align}\notag
		\begin{split}
			\mathcal{G}_{1}: \lambda_{\text{r}} (\hat{\mathcal{A}}_{\mathcal{G}_1}  )  & \approx \; \; \,0.2414 \text{,} \\
			\mathcal{G}_{2}:\lambda_{\text{r}} (\hat{\mathcal{A}}_{\mathcal{G}_2}  ) & \approx -0.5826  \text{,} \\
			\mathcal{G}_{3}:\lambda_{\text{r}} (\hat{\mathcal{A}}_{\mathcal{G}_3}   ) & \approx -0.0357    .
		\end{split}
	\end{align}
Prior to leader selection, $S = \varnothing$, then $f_{\mathcal{G}_1} = 3.2936$, $f_{\mathcal{G}_2} = 0$, $f_{\mathcal{G}_3} = 0$. In order to decease $f = f_{\mathcal{G}_1}$ to zero, we operate Algorithm 1 to obtain the result as $\{1\text{,}\, 5\text{,}\,6\}$. Obviously, this \emph{leader set} satisfies the condition that with least leaders, each agent is reachable in $\mathcal{G}_1$ instead of the union of the directed interaction graphs. In addition, the conservativeness analysis does not mean that it is enough to select the unreachable agents as leaders in the union of the directed interaction topologies. Actually, the system \eqref{error-system} requires more leaders to guarantee the tracking when the TDDT increases, such as the result in Fig. 4. To reduce the conservativeness, we propose an heuristic method as Algorithm 2. By such a scheme, with parameters in the \emph{Case Statement}, we obtain the result $\tilde{S} = \{1\text{,}\,6\}$, as well as the corresponding TDDT:$\tau_{\mathcal{G}_1} = 1.54$, $\tau_{\mathcal{G}_1} = 0.76$, $\tau_{\mathcal{G}_3} = 1.18$. In this example, $\xi = |\tilde{S}| - |S_{0}| + 1 = 1$. Besides, because of the conservativeness in the condition $f = 0$, when this requirement is removed, then it is definite that the $|\tilde{S}|$ returned by Algorithm 2 is less than or equal to $|S|$ returned by Algorithm 1. 


\section{Conclusion}

In this paper, we investigate the problem of choosing a minimum-size \emph{leader set} to achieve the tracking of a reference signal with a set of given TDDT. We show the problem description as $\bar{\mathcal{P}}1$. Then, we present Theorem \ref{unstable-Xiang} to assure the desired tracking. Subsequently, we establish a metric based on the proposed sufficient condition, and then we formulate the combinatorial optimization problem as $\check{\mathcal{P}} 2$. We design Algorithm 1 with the provable optimality bound to deal with $\check{\mathcal{P}} 2$. Then, we propose Algorithm 2 to reduce the conservativeness but the complexity may be higher than Algorithm 1. Finally, we show the numerical cases to evaluate effectiveness of the proposed method, and the conservativeness analysis for the algorithm is provided. 
The switching \emph{leader set} is considered as our future work, while the \emph{leader set} is fixed in this paper.

\appendices
\section{}


{\color{black}\begin{proof}[Proof of Lemma \ref{lemma-eigenvalue}]
		It is intuitive to see
		\begin{align}\notag
			\text{Re}(\lambda_{i}( A_p^{(1)} - \alpha I_{Nn})) < 0 \text{,} \; \text{where}\; \alpha =  - \frac{l_p + \varphi}{2 \beta \tau_p^{\text{min}}} \text{,} \; \forall p \in \mathcal{P} .
		\end{align}		
	Thus, we derive that
		\begin{align}\notag
		\text{Re}(v_i^T (A_p^{(1)} - \alpha I_{Nn} ) v_i) < 0 \text{,}
		\end{align}
		where $v_i$ is the $i$th eigenvector of $(A_p^{(1)} - \alpha I_{Nn})$ corresponding to $\lambda_i((A_p^{(1)} - \alpha I_{Nn}))$, and $v_i^Tv_i = 1$. In addition, it is straightforward to see that $v_i$ is also the $i$th eigenvector of $A_p^{(1)}$ corresponding to $\lambda_i(A_p^{(1)})$, and then $v_i^T A_p^{(1)} v_i = \lambda_{i}(A_p^{(1)})$. Then, based on the analysis above, we obtain
		\begin{align}\notag
		& \, \text{Re}(\lambda_i(A_p^{(1)} - \alpha I_{Nn})) \notag \\
	    = & \, 	\text{Re}(v_i^T (A_p^{(1)} - \alpha I_{Nn}) v_i) \notag \\
	    = &	\, \text{Re}(v_i^T A_p^{(1)} v_i) - 	\text{Re}(v_i^T \alpha I_{Nn} v_i) \notag \\
	    = & \,  \text{Re}(v_i^T A_p^{(1)} v_i) - \alpha \notag  \\
	    = & \, \text{Re}(\lambda_i(A_p^{(1)}))  - \alpha \notag \\
		< & \, 0 \notag  \text{,}
		\end{align}
Thereby, we have 
	\begin{align}\notag
		\text{Re}(\lambda_{\text{r}}(A_p^{(1)})) < \alpha . 
	\end{align}
The proof is complete.
\end{proof}}


%
%

\ifCLASSOPTIONcaptionsoff
  \newpage
\fi

\end{document}